\newcolumntype{C}[1]{>{\centering\arraybackslash}p{#1}}
\newtheorem{theorem}{Theorem}
\newtheorem{lemma}{Lemma}
\newtheorem{remark}{Remark}
\newtheorem{proposition}{Proposition}
\newtheorem{corollary}{Corollary}
\newtheorem{definition}{Definition}
\numberwithin{equation}{section}
\DeclareMathOperator*{\argmax}{arg\,max}
\DeclareMathOperator*{\argmin}{arg\,min}
\title{Multi-Objective Predictive Taxi Dispatch via\\ Network Flow Optimization\thanks{This research was supported in part by  the Creative-Pioneering Researchers Program through SNU, the Basic Research Lab Program through the National Research Foundation of Korea funded by the MSIT(2018R1A4A1059976), and BK21 Plus Project in 2019. Corresponding author: Insoon Yang (e-mail: insoonyang@snu.ac.kr).}}
\author[1]{Beomjun Kim}
\author[2]{Jeongho Kim}
\author[1,3]{Subin Huh}
\author[4]{Seungil You}
\author[1,3]{Insoon Yang}
\affil[1]{Department of Electrical and Computer Engineering, Seoul National University, Seoul 08826, South Korea}
\affil[2]{Institute of New Media and Communications, Seoul National University, Seoul 08826, South Korea}
\affil[3]{Automation and Systems Research Institute, Seoul National University, Seoul 08826, South Korea}
\affil[4]{Kakao Mobility, Gyeonggi-do 13494, South Korea}
\date{}
\begin{document}
\maketitle

\pagestyle{myheadings}
\thispagestyle{plain}

\begin{abstract}
In this paper, we discuss a large-scale fleet management problem in a multi-objective setting. 
We aim to seek a receding horizon taxi dispatch solution that serves as many ride requests as possible while minimizing the cost of relocating vehicles. 
To obtain the desired  solution, we first convert the multi-objective  taxi dispatch problem into a network flow problem, which can be solved using the classical minimum cost maximum flow (MCMF) algorithm.
We show that a solution obtained using the MCMF algorithm is integer-valued;  thus, it does not require any additional rounding procedure that may introduce undesirable numerical errors. 
Furthermore, we prove the {\it time-greedy property} of the proposed solution, which justifies the use of receding horizon optimization. 
For computational efficiency, we propose a linear programming method 
to obtain an optimal solution in near real time. 
The results of our simulation studies using the real-world data for the metropolitan area of Seoul, South Korea indicate that the performance of the proposed predictive method is  almost as good as that of the oracle that foresees the future.
\end{abstract}

\begin{keywords}
Taxi dispatch, fleet management, mobility on demand, network optimization, linear programming, model predictive control, multi-objective optimization. 
\end{keywords}

\section{Introduction}

\label{sec:introduction}
On-demand ride-hailing services, such as Uber, Lyft, Didi, Grab, and Kakao T, have handled  millions of ride-hailing orders per day in 2019.
The increasing demand for ride-hailing services introduces various new technologies  and   challenges to  service providers~\cite{8713568, 7995029, 8412487}.
The most challenging problem is to maintain a proper number of ride service suppliers (taxis or drivers) in each service area.
If there are not enough vehicles located nearby when a customer requests a ride, then the service provider is unable to serve the request. 
This leads to bad user experience and  potential revenue loss.
Therefore, many ride-hailing platforms focus on developing ways to align the spatial distribution of vehicles (or drivers) to that of ride requests by using,  for example, dynamic surge pricing \cite{chen2016dynamic} and pooling \cite{mohlmann2017hands}, to improve the order fulfillment rate of each platform.

Aligning the  spatial distribution of supply to that of demand is a long-standing research topic in spatial crowdsourcing \cite{zhao2016spatial}.
In the context of ride-hailing services, these problems are closely related to  fleet management (e.g., \cite{godfrey2002adaptive,SchmidVerena2012Stda,HongLianxi2012AiLa,GhannadpourSeyedFarid2014Amdv,powell1998dynamic,george2011fleet}), and   taxi  dispatching   (e.g., \cite{zhang2017taxi,Liu:2019:GRS:3308558.3313579,Huang:2019:OTD:3306127.3331998,Tang:2019:DVB:3292500.3330724,lee2004taxi,ke2018hexagon}).

In taxi dispatching, the major concern  is to consider the future demand for ride requests, which is difficult to predict reliably. 
The demand is often modeled as a random process with an estimated distribution, and a dispatch solution is obtained by modeling a mobility-on-demand system of interest as a queueing network~\cite{zhang2014queueing, zhang2016control}, and a Markov decision process~\cite{volkov2012markov}.
However, incorporating the data collected in real time into these offline solutions is a nontrivial task. 
This issue has been alleviated by using (deep) reinforcement learning (RL)\cite{jintaoke2010optimizing,thomyphan2019distributed}. 
RL is intended to capture the interactions between a large volume of vehicles in an adaptive manner. 
However, due to the curse of dimensionality, in practice,   it is used in conjunction with an approximation technique, which often degrades the performance of this approach in large-scale fleet management~\cite{minneli2018efficient,yujiechen2019can}.
RL methods also require a substantial  amount of data to learn an efficient dispatch policy by capturing
how to utilize various factors in a given transportation system~\cite{oda2018movi,ke2018hexagon,linkaixiang2018efficient,xihanli2019cooperative}.
Moreover, it is difficult to incorporate constraints into RL methods, particularly when the constraints change over time, although several methods have recently been proposed to address this issue~\cite{achiam2017constrained,dalal2018safe,tessler2018reward}.

On the other hand, a model predictive control (MPC) method can also be used to deal with the dispatching problems in the mobility-on-demand system. The idea of the MPC method is to solve an optimization problem at each time step for a fixed time horizon, obtain a sequence of controls, and use only the first control for the current time step. In the next time step, the same procedure is performed, but with the shifted time horizon. MPC methods have demonstrated several theoretical and empirical advantages in solving large-scale taxi dispatch problems.
In MPC, online observations and constraints can simply be considered in a receding-horizon optimization problem in each time step.
Several MPC-based dispatch methods have been proposed in previous literature. To name a few, the MPC-based method is applied to the problem of rebalancing a bike-sharing system that ensures
the feasibility of imposed constraints~\cite{calafiore2017flow}, but they considered a degree of alignment between demand and supply as a constraint, not as an objective function. A taxi service planning is designed using the MPC method so that monotonic improvement of iterative solutions is guaranteed~\cite{luo2018dynamic}, but they only minimize cruising distance of the vehicle. Other than them, the MPC methods have been applied to provide a worst-case performance guarantee \cite{miao2019data}, the asymptotic performance guarantee that the number of unserved ride requests converges to zero~\cite{zhang2016model}, or combined with long short-term memory (LSTM) neural networks which forecast a future demand~\cite{iglesias2018data}.
Another variant of the MPC approach, which is closely related to our method, was proposed in~\cite{miao2016taxi}. They considered a multi-objective framework similar to our method, which tries to make a ratio between supply and demand on vehicle homogeneous over a spatial region with a minimal idle driving distance. They introduced a stochastic taxi dispatching model and achieved consistency between driver and passenger distributions. On the other hand, we consider a deterministic taxi dispatching model based on the flow problem. Our method has additional advantageous theoretical properties that remove rounding errors and provide a performance guarantee regardless of prediction accuracy.

Finally, we mention that there have been several previous works about allocating the ride requests to the currently idle taxis. For example, the allocation of the ride request can be modeled by a combinatorial optimization problem and the acceptance of the allocated request can be modeled by the probability distribution [13]. Moreover, an independent and cooperative ride request allocating algorithm was also proposed [24]. In [38], an RL-based algorithm that can allocate the large-scale ride requests in real-time was proposed. However, departing from the above references, we focus on repositioning drivers and the allocation of the ride request is automatically done by allocating from the most expensive requests to idle drivers.

In this paper, we propose an efficient, yet very simple method for large-scale taxi dispatch by exploiting the aforementioned advantages of MPC in a multi-objective setting.
The proposed taxi dispatch solution aims to serve as many ride requests as possible while minimizing the total reposition cost. 
We model an urban transportation system of interest as a {\it flow network}, which is a useful approach introduced to manage mobility-on-demand platforms \cite{contardo2012balancing,alvarezvaldes2016optimizing,calafiore2017flow}.
In particular, 
we model the transportation system  as a grid-world consisting of homogeneous regular hexagonal cells as  in \cite{ke2018hexagon,linkaixiang2018efficient,zhouming2019multi}.
Each hexagonal cell represents the coverage of a vehicle in a unit time interval.

To develop a tractable solution method, we convert the multi-objective taxi dispatch problem in a receding time window into a minimum cost maximum flow (MCMF) problem, which can be solved using well-known algorithms \cite{FordL.R.LesterRandolph2010Fin,BusackerRobertG1960APFD}.
We also investigate the theoretical properties of the proposed method. 
We first show that the optimal solution obtained by the proposed method is a vector of integers. 
Thus, our method provides a physically meaningful solution because the decision variable represents the number of relocating vehicles, and it does not require any rounding, which may introduce undesirable numerical errors. 
Moreover, we show the {\it time-greedy property} that justifies the use of receding horizon optimization. The time-greedy property of the solution means that the proposed dispatch solution maximizes the number of served ride requests at least at the initial time step, even under uncertainty on future demand. In other words, the number of served ride requests at the initial time step using the proposed dispatch solution is always greater than equal to that of any other dispatch solution. The time-greedy property provides a worst-case performance guarantee and therefore, the proposed predictive method limits the undesirable negative impact of future uncertainty in demand.
To further improve computational efficiency for large-scale taxi dispatch problems, we also develop a {\it linear programming} (LP) method
by reformulating  the MCMF problem as a linear optimization problem without loss of optimality.
According to the results of our numerical test,
the proposed LP method is approximately 880 times faster than the classical MCMF algorithm~\cite{FordL.R.LesterRandolph2010Fin,BusackerRobertG1960APFD}.

The performance  of the proposed taxi dispatch method is demonstrated using real-world data for the metropolitan area of Seoul, South Korea, and it is compared with the performance of the oracle, a deep RL method, and two rule-based algorithms.
The performance of the proposed method is comparable to that of the oracle although the oracle foresees the future. Specifically, it achieves 97.16\% of the oracle's performance.
Consequently, our dispatch solution significantly outperforms all the other methods. 
 
 We summarize the main contribution of our paper as follows.

\begin{itemize}
	\item We present the flow network problem equivalent to the multi-objective taxi dispatch problem. The solution of the flow network problem is automatically guaranteed as an integer solution. 
	\item We propose a model predictive control method to consider potential future ride requests. The MPC method satisfies time-greedy property which guarantees that the method takes a maximal number of requests, at least at the current time step regardless of the accuracy of prediction.
	\item We further convert a flow network problem into the equivalent LP formulation, which has significantly less computational cost than the flow network problem.
	\item Finally, we make a simulation and compare the performance of our proposed algorithm with the other algorithms, using the real-world data of Seoul. The proposed algorithm shows almost the same performance as the oracle and outperforms a deep RL-based method and two rule-based algorithms that are used to be compared.
\end{itemize}

The rest of the paper is organized as follows. In Section \ref{sec:setup}, the taxi dispatch problem is formulated as a multi-objective optimization problem. In Section \ref{sec:method}, we convert this dispatch problem into the MCMF problem, and introduce the classical MCMF algorithm. 
We also provide the theoretical properties of the optimal solution obtained using the MCMF algorithm. 
For computational efficiency, we  also propose an equivalent LP formulation in this section. 
In Section \ref{sec:Experiment results}, we present  the results of our numerical experiments using real-world data, and discuss comparisons with other methods.

\section{The Setup}\label{sec:setup}

\begin{figure}[tb]
	\centering
	\includegraphics[width=3in]{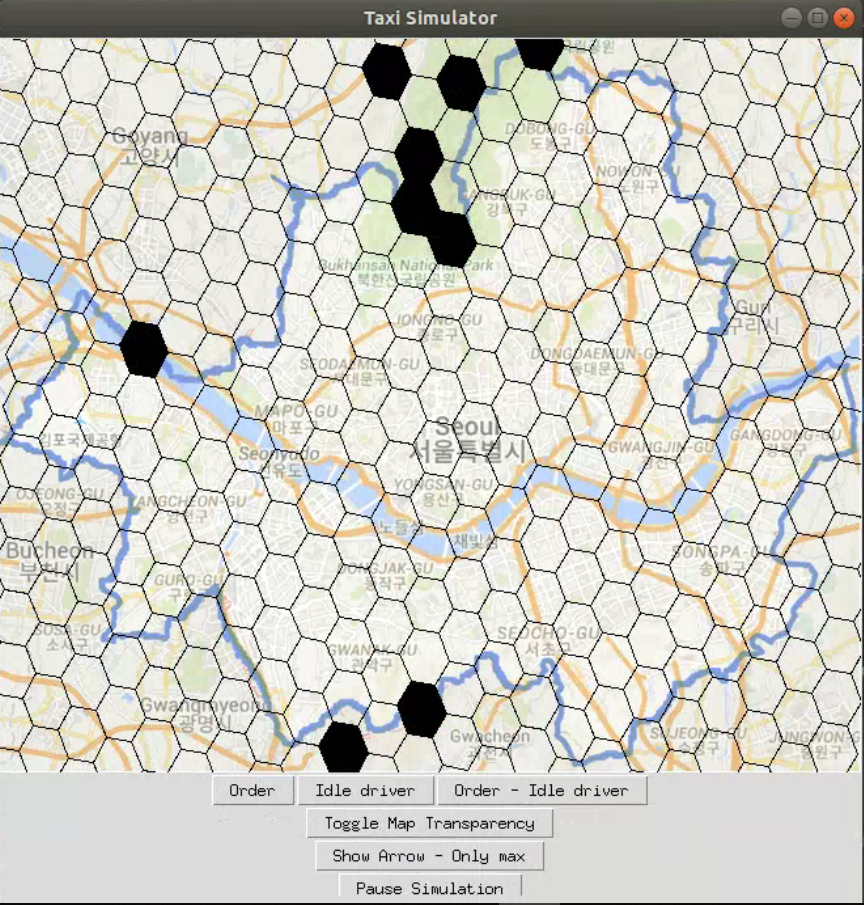}
	\caption{Hexagonal grid map ofthe  Seoul metropolitan area.
	The boundary of Seoul City is shown in light blue on the map. 
	}
	\label{hexagonal}
\end{figure}

In this paper, we consider a global taxi controller that assigns ride requests to homogeneous taxis  and relocates vehicles to another area,  if necessary. The goal of this controller is to maximize the gross merchandise volume (GMV), the sum of the fares of the served requests, while minimizing the repositioning costs.
A controller can choose to reposition a vehicle from a low-demand area to a high-demand area to maximize GMV, but this repositioning implies that a vehicle would use gas without a passenger, which is a cost.
Therefore, we model the objective of the global controller as maximizing the difference between GMV and the total repositioning costs.

At each time step, this fleet controller can command three actions to each taxi:
\begin{itemize}
\item To move a taxi to a nearby area (repositioning);
\item To assign a nearby  ride request (order assignment);
\item To wait (idle).
\end{itemize}

We note that, if there are not enough ride requests nearby, taxis can travel a long distance to pick up passengers by repeating repositioning action for several time steps.

Here, we assume that all taxis are homogeneous and always follow the command without exception.
When the controller assigns a ride request, which consists of origin, destination, fare, and estimated time of arrival to the destination, to a taxi, the vehicle goes offline at the origin, and it becomes available to follow the command after arriving at the destination (goes online).
Finally, when the controller decides to make a taxi idle, the vehicle is available to follow the control actions in the next time step at the vehicle's current location.
Throughout the paper, we interchangeably use the terms, taxi and vehicle.

\subsection{Fleet Management System Model}

We partition an area of interest into regular hexagons of the same size, and  we also uniformly partition a day into several intervals (usually by 10 minutes). Although one can consider a more accurate model, we will focus on this simple hexagonal model for several reasons. First, the size of the hexagonal grid is chosen so that the median time to travel from the center of one cell to its boundary is approximately 10 minutes, so that vehicles can move to the neighboring cell in a single time interval. Moreover, as we will see in the numerical experiment, the efficiency of the random method is reasonably close to our method. These imply that our model well approximates reality. Moreover, accurate simulation requires a massive amount of computing resources, which is infeasible. For this reason, previous researches also rely on the same idea of hexagonal discretization [18], [38].

Figure~\ref{hexagonal} shows a hexagonal grid map of
 the Seoul metropolitan area. 
We use this spatial discretization to model the dynamics of vehicle and order distributions.
Aside from the map boundary, each cell in the aforementioned hexagonal grid map has  6 adjacent neighboring cells. 
To model this connectivity in the hexagonal grid, we construct a graph, where a vertex is a cell in the hexagonal grid, denoted by $\mathcal{N}_i$, $i=1, \ldots, n$, and an edge exists between two vertices if the corresponding two cells are adjacent to each other in the hexagonal grid map.
More precisely, we define $L$ as an $n\times n$ {\it adjacency matrix} of nodes, given by
\[L_{ij}=\begin{cases}
1 & \quad \mbox{if $\mathcal{N}_i$ is adjacent to $\mathcal{N}_j$ or $i=j$}\\
0 & \quad \mbox{otherwise}.
\end{cases}\] 
Note that we allow the self-loop so that $L_{ii}=1$ for all $i$, which implies that the movement from one hexagonal region to itself is also considered. 
Thus, if $\mathcal{N}_i$ is located in the interior of the map, the $i$th row and the $i$th column of $L$ have exactly 7 ones, and the remaining rows and columns have less than 7 ones. For example, consider the 7 hexagonal cells shown in Figure \ref{grid}. The adjacency matrix $L$ for this grid is given by
\[L=\begin{pmatrix}
1&1&0&0&0&1&1\\
1&1&1&0&0&0&1\\
0&1&1&1&0&0&1\\
0&0&1&1&1&0&1\\
0&0&0&1&1&1&1\\
1&0&0&0&1&1&1\\
1&1&1&1&1&1&1
\end{pmatrix}\]
because $\mathcal{N}_7$ is adjacent to all the other cells. 

\begin{figure}[tb]
	\centering
	\includegraphics[width=1.5in]{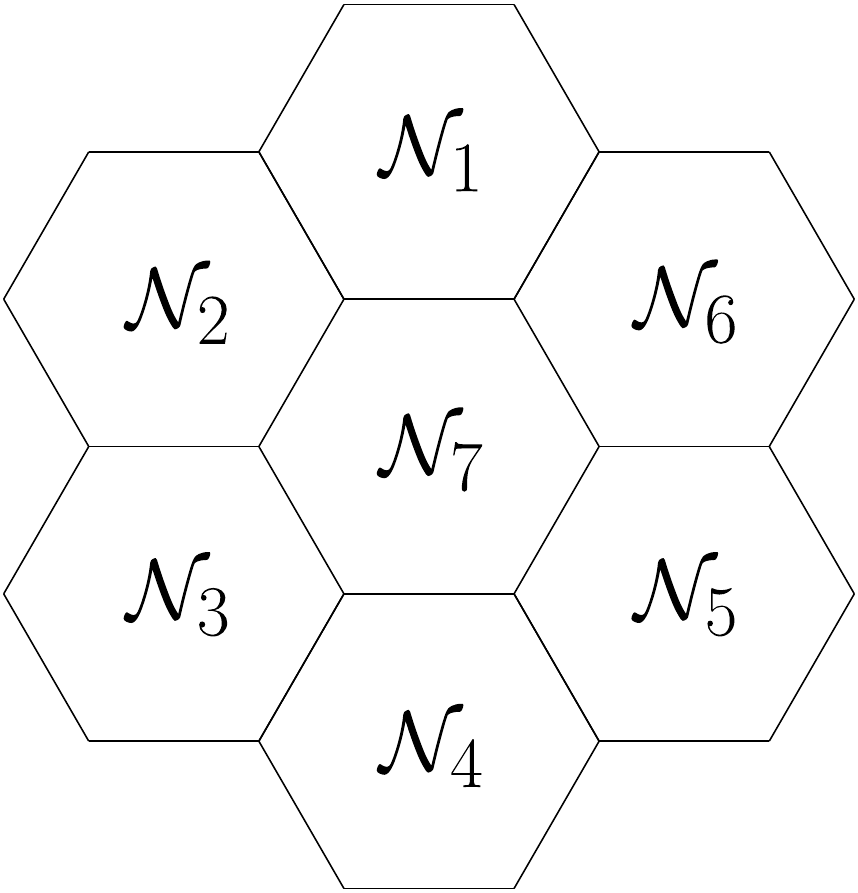}
	\caption{Simple hexagonal grid.}
	\label{grid}
\end{figure}

The time interval for two consecutive dispatch decisions is chosen as 10 minutes. 
Thus, one day is discretized into $T=144$ time steps.
We use the time index $t \in \{0,1,...,T-1\}$.
It is assumed that, in each time step, a vehicle can stay in the current cell or move to its neighboring cell.

\begin{figure}[tb]
	\centering
	\includegraphics[width=3.35in]{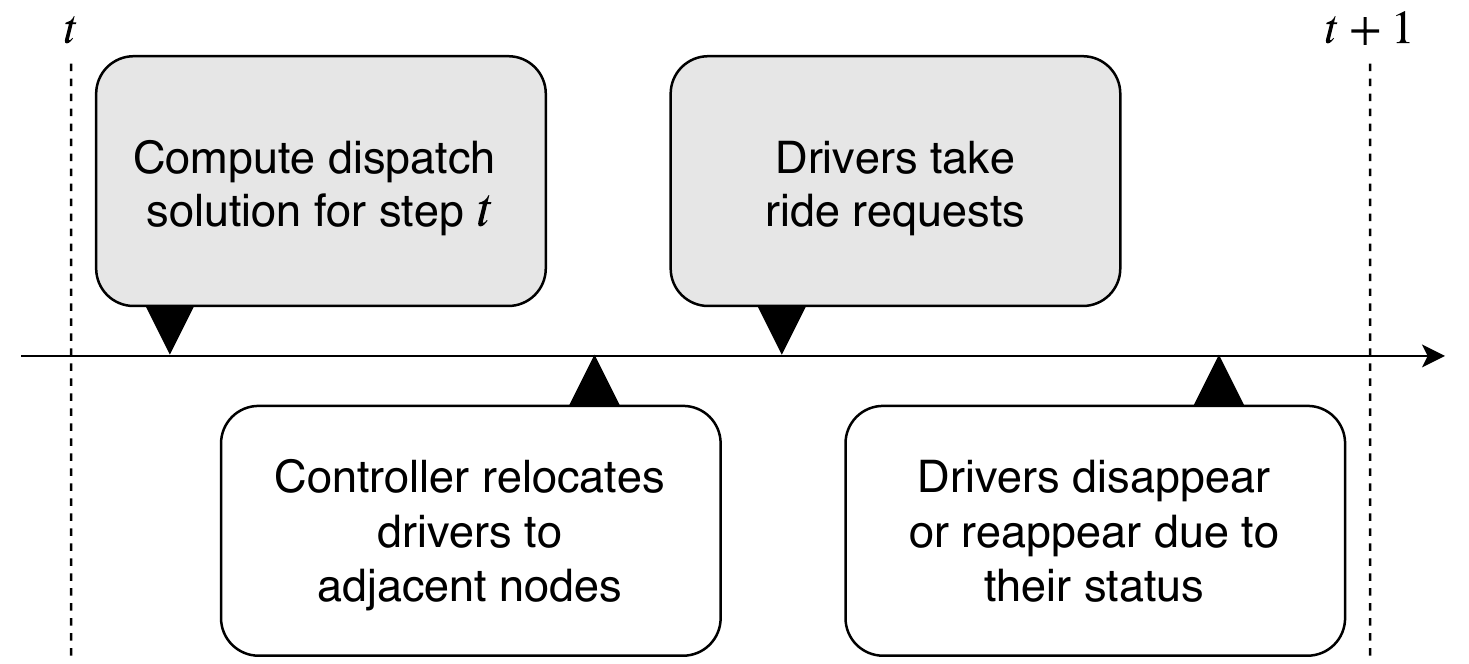}
	\caption{Illustration of fleet management processes in a single time step}
	\label{fig:dispatch}
\end{figure}

In the beginning of time step $t$,
the fleet controller has information about 
the number of ride requests $r_{i,t}$ and the driver distribution $d_{i,t}$. 
Given this information, it computes a taxi dispatch solution $x_t$.
Then, the vehicle redistribution process begins based on the dispatch solution, as illustrated in Figure~\ref{fig:dispatch}.
The process of redistributing vehicles
 in each time step consists of 
 three stages, as presented below:\footnote{
The division of the vehicle redistribution process into three stages  is somewhat artificial, but it is useful for formulating the system as a network flow problem, as discussed in Section \ref{sec:method}.
}
\begin{enumerate}
\item In the first stage, a vehicle located in $\mathcal{N}_i$ is capable of moving to $\mathcal{N}_j$ if $L_{ij}=1$. Thus, every vehicle can stay in the same hexagonal cell or move to the neighboring cells. Let $d_{i,t}$ denote the number of drivers or vehicles in $\mathcal{N}_i$ in the beginning of time step $t$. 
Then, the controller decides the number of vehicles that move from $\mathcal{N}_i$ to $\mathcal{N}_j$ in time step $t$, which is denoted as $x_{i\to j,t}$. 
If $L_{ij}=0$, then clearly $x_{i \to j,t} = 0$. 
Finally, the reposition cost $c_{i \to j,t}$ is imposed when a vehicle moves from $\mathcal{N}_i$ to $\mathcal{N}_j$ in time step $t$.

\item In the second stage, drivers take ride requests in the current cell. 
Let $r_{i,t}$
 denote the number of ride requests in $\mathcal{N}_i$ in time step $t$. Whenever a driver in $\mathcal{N}_i$ serves a request, it is excluded from $d_{i,t}$.
 After the travel is completed, the driver is added to the destination $\mathcal{N}_j$.

\item In the third stage, the total number of vehicles in $\mathcal{N}_i$ is adjusted based on each vehicle's status. 
Some vehicles may go offline due to the end of service, while some other vehicles may go online to start the service.
Moreover, any vehicle with a ride request just completed becomes online in the destination cell. 
\end{enumerate}
In the second stage, after vehicles are repositioned,
the controller automatically matches drivers and ride requests from the most expensive ones to maximize the income from the served requests. The controller neglects other factors, such as the expected travel time. 
We will justify this matching strategy in Section~\ref{sec:Experiment results} by empirically showing that the performance of our solution is similar to that of the oracle, which uses all data about ride requests.
We note that this process achieves fairness among drivers in the same grid by randomly assigning ride requests to them.
Under this {automated matching} assumption, maximizing GMV is equivalent to serving  as many  ride requests as possible. 
We consider the latter as an objective function in the optimization problem to be introduced.

The notation used throughout  the paper is summarized in Table \ref{table}.
We also use the following notation for vectors and matrices: 
$d_{t} := (d_{1,t}, \ldots, d_{n,t}) \in \mathbb{R}^n$, and $d := \begin{bmatrix} 
d_0 & \cdots & d_{T-1}
\end{bmatrix} \in \mathbb{R}^{n \times T}$.
Similar notation is used for vectors and matrices of $r_{i,t}$. 
Moreover, we let $x_t := (x_{1 \to 1, t}, \ldots, x_{1 \to n, t},$ $\ldots, x_{n \to 1, t}, \ldots, x_{n \to n, t}) \in \mathbb{R}^{n^2}$, and 
$x := \begin{bmatrix} 
x_0 & \cdots & x_{T-1}
\end{bmatrix} \in \mathbb{R}^{n^2 \times T}$.
Similar notation is used for vectors and matrices of $c_{i\to j,t}$.

\begin{table}[tb]
\normalsize
\centering
\begin{tabular}{ l|l }
\hline
Symbol& 
Quantity
 \\
\hline
\hline
$\mathcal{N}_i$& 
$i$th hexagonal cell (node) \\
$L$& 
adjacency matrix \\
$d_{i,t}$& 
\#  of drivers in $\mathcal{N}_i$ in step $t$ \\
$r_{i,t}$& 
\#  of ride requests in $\mathcal{N}_i$ in step $t$ \\
$x_{i \to j,t}$& 
\# of vehicles moving from $\mathcal{N}_i$ to $\mathcal{N}_j$ in step $t$ \\
$c_{i \to j,t}$& 
cost of relocating from $\mathcal{N}_i$ to $\mathcal{N}_j$ in step $t$\\

\hline
\end{tabular}
\caption{Notation}
\label{table}
\end{table}

\subsection{Multi-Objective Taxi Dispatch Problem}
\label{sec:2B}

The objective of taxi dispatch is to maximize GMV or, equivalently, to serve  as many  ride requests as possible, with minimal relocation costs. 
Thus, the goal of the fleet controller is twofold: $(i)$ to maximize the number of served requests, and $(ii)$ to minimize the total  cost incurred by relocating vehicles. 
Therefore, it is natural to consider a multi-objective optimization problem.
We first note that the number of served requests in $\mathcal{N}_i$ cannot be larger than the number of drivers or the number of ride requests in the same cell. 
The number of drivers in $\mathcal{N}_i$ in time step $t$ is given by
$d_{i,t}+\sum_{j=1}^n x_{j\to i,t}-\sum_{k=1}^n x_{i\to k,t}$.
Thus, the total number of served requests  in $\mathcal{N}_i$  in time step $t$ is given by
\[
g_{i, t} (x_t) := \min \bigg \{ d_{i,t}+\sum_{j=1}^n x_{j\to i,t}-\sum_{k=1}^n x_{i\to k,t},r_{i,t}\bigg \}.
\]
Hence, the 
first objective function, which represents the total number of served requests throughout the entier day, is defined by
\begin{equation} \label{opt1}
\bar{f}_+ (x):= \sum_{t=0}^{T-1} \sum_{i=1}^n   g_{i, t} (x_{t}).
\end{equation}

On the other hand, the cost of repositioning vehicles in time step $t$ is simply given by
\[
h_t(x_t) := \sum_{i,j=1}^n c_{i \to j, t} x_{i \to j, t}.
\]
Thus, the second objective function,  which computes the total repositioning cost, is defined by
\begin{equation}\label{opt2}
\bar{f}_- (x):=  \sum_{t=0}^{T-1} h_t (x_t).
\end{equation}
Note also that the following constraints must be satisfied:
\begin{subequations}\label{const}
\begin{align}
&d_{i,t+1} = d_{i, t}+\sum_{j=1}^n  x_{j \to  i,t}- \sum_{k=1}^n  x_{i\to k,t} - g_{i,t} (x_t) \label{const:flow} \\
&\sum_{j=1}^n x_{i \to j,t} \leq d_{i,t} \label{const:mass}\\
& x_{i \to j,t}\geq 0 \quad \forall i=1, \ldots, n, \forall t = 0, \ldots, T-1 \label{const:sign}
\end{align}
\end{subequations}
Here, the first constraint describes the evolution of the number of vehicles in $\mathcal{N}_i$.
The second constraint represents that the vehicles moving out of $\mathcal{N}_i$ is limited by the number of drivers in the cell.

Putting all the pieces together, we formulate the taxi dispatch problem as the following multi-objective optimization problem:
\begin{equation} \label{opt}
\begin{split}
\max \quad & \left (\bar{f}_+ (x), - \bar{f}_- (x) \right )\\
\mbox{s.t.} \quad &\eqref{const}.
\end{split}
\end{equation}
The major concern in solving this problem is the lack of knowledge about the number of ride requests $r_{i,t}$ for future time steps. 
As time goes by, information about $r_{i, t}$'s unfolds. 
Thus, instead of directly solving this problem offline,
we optimize the objective functions in a {\it receding horizon} fashion.
  In the next section, we explain the receding horizon optimization problem in detail, and we solve it by using an equivalent minimum cost maximum flow problem and a linear program.

\section{Predictive Dispatch Method via Network Flow Optimization}\label{sec:method}

\subsection{Converting to A Network Flow Problem}

We first consider the simplest case when $T = 1$. 
In this case, the constraint~\eqref{const:flow} has no effect. 
Recall that the fleet controller redistributes the vehicles in each hexagonal cell to its adjacent cells following the three-stage procedure. 
We interpret this procedure as controlling the flow between the nodes.
Let $\mathcal{V}_i$ and $\mathcal{W}_i$ be the copies of nodes $\mathcal{N}_i$, $i=1, \ldots, n$.
We consider
the movement of vehicles  at the first stage as the flow from the nodes $\mathcal{V}_i$ to $\mathcal{W}_j$.
Here, the flow capacity between $\mathcal{V}_i$ and $\mathcal{W}_j$ is defined as $+\infty$ if $L_{ij}=1$ and 0 otherwise. 
Moreover, since no vehicles can move from the future to the past, the flow capacity from $\mathcal{W}_j$ to $\mathcal{V}_i$ is set to be 0, regardless of the value of $L_{ij}$. 

\begin{figure}[tb]
	\centering
	\includegraphics[width=2.5in]{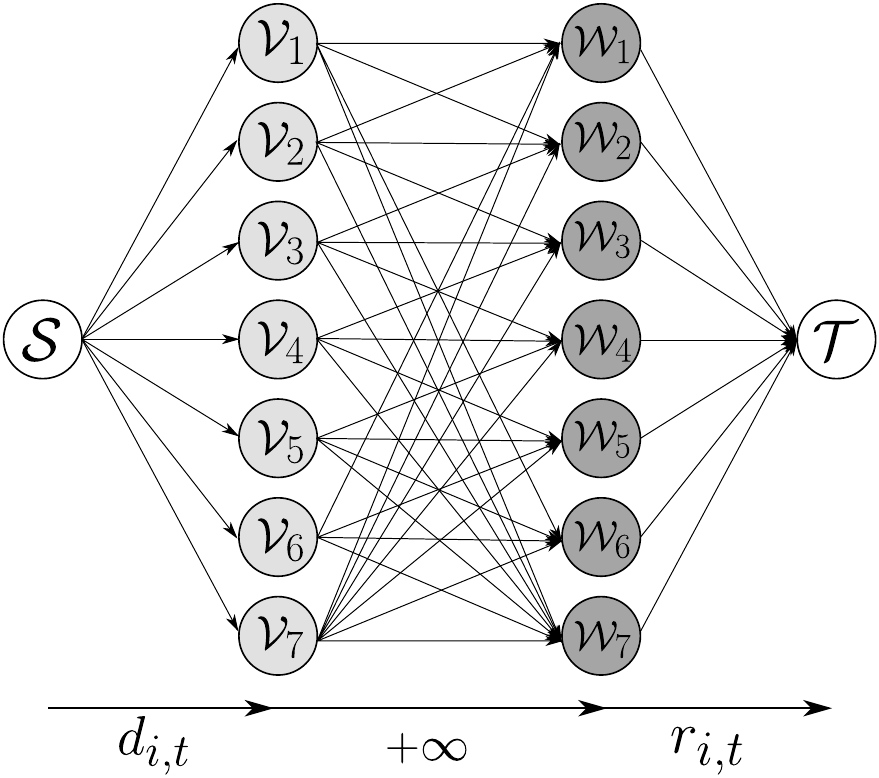}
	\caption{Flow diagram of the taxi dispatch problem with 7 hexagonal cells in Figure~\ref{grid} when $T=1$.}
	\label{flow}
\end{figure}

To complete the network flow problem formulation, we introduce an artificial source, $\mathcal{S},$ and a sink, $\mathcal{T}$. 
The flow capacity from  $\mathcal{S}$ to $\mathcal{V}_i$ corresponds to the number of vehicles in the beginning of the first stage, which is equal to $d_{i, 0}$, and the flow capacity from $\mathcal{W}_j$ to $\mathcal{T}$ corresponds to the number of requests, which is $r_{i,0}$. Finally, the cost of moving from $\mathcal{V}_i$ to $\mathcal{W}_j$ is $c_{i\to j, 0}$, and the cost of moving from  $\mathcal{S}$ to $\mathcal{V}_i$ or from $\mathcal{W}_j$ to $\mathcal{T}$ is 0. Figure \ref{flow} shows 
 the flow diagram converted from the original problem with  the 7 hexagonal regions illustrated in Figure \ref{grid}. 
The one-to-one relationships between the taxi dispatch problem and the network flow problem are summarized in Table~\ref{table2}.

\begin{table}[tb]
	\normalsize
	\centering
	\begin{tabular}{l || l|l}
		\hline
		&Taxi dispatch & Network flow problem 
		\\
		\hline
				\hline
		$d_{i,t}$& \# of drivers  & flow capacity from $\mathcal{S}$ to $\mathcal{V}_i$\\
		$r_{i,t}$&  \# of  requests & flow capacity from $\mathcal{W}_i$ to $\mathcal{T}$\\
		$c_{i \to j, t}$ & reposition costs & flow costs\\
		\hline
	\end{tabular}
	\caption{Relationship between the taxi dispatch problem and the network flow problem}
	\label{table2}
\end{table}

In this network flow setup, we can use the minimum cost maximum flow (MCMF) algorithm introduced in \cite{FordL.R.LesterRandolph2010Fin,BusackerRobertG1960APFD}.
To  describe the MCMF algorithm, we introduce several concepts about  directed flow networks. 
Let $G=(V,E)$ be a digraph with a  set of vertices $V=\left\{v_1,v_2,\ldots, v_n\right\}$ that includes a source, $\mathcal{S}$, and a sink, $\mathcal{T}$,  and the  set of edges $E\subseteq V\times V$. Assume that the capacity $s_{ij}$ and cost $c_{ij}$ associated with each edge $(v_i,v_j)$ are given. 
Let $f_{ij}$ denote the flow from $v_i$ to $v_j$.
We now define the {\it residual network} and {\it augmented path} as follows:
\begin{definition}
Suppose that a flow network $G = (V, E)$ is given.
	\begin{enumerate}
		\item 
		The \emph{residual network} $G_f = (V, E_f)$ is defined by the graph with the node set $V$, and edge set $E_f := \{ (v_i,v_j) \in V \times V \mid s^f_{ij} :=s_{ij}-f_{ij} > 0\}$.
		
		\item In the residual network $G_f$, a path from the source to sink $p := ( \mathcal{S}=v_{i_1},v_{i_2},\ldots, v_{i_k}= \mathcal{T})$ is called an \emph{augmented path} if $s^f_{i_\ell  i_{\ell+1}}>0$ for all $\ell=1,2,\ldots,k-1$.
	\end{enumerate} 
\end{definition}
Here, $s_{ij}^f$ represents the capacity of edge $(v_i,v_j)$ in the residual network.

\begin{algorithm}[tb]
	\caption{Minimum cost maximum flow (MCMF) algorithm}
	\label{max_flow_algorithm}
	\begin{algorithmic}[1]
		\REQUIRE digraph $G=(V,E)$;
		\REQUIRE capacity $s_{ij}$, cost $c_{ij}$ for $(v_i,v_j)\in E$;
		\REQUIRE $(v_j,v_i)\notin E$  $\forall (i,j)$ such that  $(v_i,v_j)\in E$;
		\STATE Initialize $s_{ji}\gets0$, $c_{ji}\gets-c_{ij}$, $f_{ij}\gets0$, $f_{ji}\gets0$ for $(i,j)$ such that $(v_i,v_j)\in E$;
		
		\WHILE{ $\exists$ augmented path}
		\STATE $p \in \arg\min_{q:\textup{augmented path}} \sum_{(v_i,v_j)\in q} c_{ij}$;
		\STATE $s^f\gets\min\big \{s^f_{ij}\mid (v_i,v_j)\in p\big \}$;
		\FOR{$(i,j)$ such that $(v_i,v_j)\in p$} 
		\STATE $f_{ij}\gets f_{ij}+s^f$;
		\STATE $f_{ji}\gets f_{ji}-s^f$;
		\ENDFOR
		\ENDWHILE
	\end{algorithmic}
\end{algorithm}

The MCMF algorithm (Algorithm~\ref{max_flow_algorithm}) computes the maximum flow, while minimizing the total cost.
It is well known that the MCMF algorithm has  time complexity $O(|V|\times |E|\times F)$, where $F$ denotes the maximum amount of flow from $\mathcal{S}$ to $\mathcal{T}$~\cite{BusackerRobertG1960APFD}. 

By the equivalence between the taxi dispatch problem~\eqref{opt} and the network flow problem, the former can be solved  using Algorithm \ref{max_flow_algorithm}. 
For example, consider the network shown in Figure \ref{flow}. Here, the vertex set is $V=\{\mathcal{S},\mathcal{T},\mathcal{V}_i,\mathcal{W}_i \mid i=1,\ldots,7 \}$, and the edge set is given by $E=\{(\mathcal{S},\mathcal{V}_i),(\mathcal{W}_j,\mathcal{T}), (\mathcal{V}_{i'},\mathcal{W}_{j'}) \mid i,j=1,\ldots, 7, \; i', j' = 1, \ldots, 7 \mbox{ such that }  L_{i'j'} = 1 \}$. The capacities of $(\mathcal{S},\mathcal{V}_i)$ and $(\mathcal{W}_j,\mathcal{T})$ are set to be $d_{i, 0}$ and $r_{i, 0}$, respectively, and the capacity of $(\mathcal{V}_i,\mathcal{W}_j)\in E$ is set to be $+\infty$. Finally, the cost of sending flow is set to be $c_{i \to j, 0}$ for $(\mathcal{V}_i,\mathcal{W}_j)\in E$ and zero for all the other edges. 
Then, the following proposition is a direct result of the MCMF algorithm, applied to the taxi dispatch problem.

\begin{proposition}\cite[Theorem 4]{BusackerRobertG1960APFD}\label{P1}
	Let $x^\star := (x_{i\to j,0}^\star)_{i,j}$ be the minimum cost maximum flow solution obtained by applying the MCMF algorithm (Algorithm~\ref{max_flow_algorithm})  to the network flow problem converted from the taxi dispatch problem~\eqref{opt}. Then, $x^\star$ is an optimal solution of the taxi dispatch problem \eqref{opt}.
\end{proposition}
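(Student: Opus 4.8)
The plan is to prove optimality by exhibiting a value-preserving correspondence between feasible flows of the converted network and feasible dispatch vectors of \eqref{opt} in the case $T=1$, and then to invoke the cited optimality of Algorithm~\ref{max_flow_algorithm}. Since \eqref{const:flow} is vacuous when $T=1$, the feasible set of \eqref{opt} is cut out by \eqref{const:mass}--\eqref{const:sign} alone. Writing $D_j := d_{j,0}+\sum_i x_{i\to j,0}-\sum_k x_{j\to k,0}$ for the post-reposition driver count, the served count in $\mathcal{N}_j$ is $g_{j,0}(x_0)=\min\{D_j,r_{j,0}\}$. My goal is to show that the set of achievable pairs $\big(\bar{f}_+(x),\bar{f}_-(x)\big)$ and the set of (flow value, flow cost) pairs share the same lexicographic optimum, namely maximize served requests first and then minimize cost, which is exactly what Algorithm~\ref{max_flow_algorithm} returns.

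First I would dispatch the easy direction, flow $\to$ dispatch. Given any feasible flow $f$ of value $V$ and cost $C$, set $x_{i\to j,0}:=f_{\mathcal{V}_i\mathcal{W}_j}$. The capacity $\sum_j f_{\mathcal{V}_i\mathcal{W}_j}\le d_{i,0}$ on the edges out of $\mathcal{S}$ is precisely \eqref{const:mass}, nonnegativity is \eqref{const:sign}, so $x$ is feasible and $\bar{f}_-(x)=C$. For the served count, the capacity out of $\mathcal{V}_j$ gives $\sum_k x_{j\to k,0}\le d_{j,0}$, hence $D_j\ge\sum_i f_{\mathcal{V}_i\mathcal{W}_j}$, while the sink capacity gives $\sum_i f_{\mathcal{V}_i\mathcal{W}_j}\le r_{j,0}$. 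Both arguments of the minimum therefore dominate $\sum_i f_{\mathcal{V}_i\mathcal{W}_j}$, so $g_{j,0}(x_0)\ge\sum_i f_{\mathcal{V}_i\mathcal{W}_j}$ and summing yields $\bar{f}_+(x)\ge V$.

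The hard direction, dispatch $\to$ flow, is where I expect the main difficulty, because a feasible dispatch may relocate vehicles that end up idle, so a flow cannot be read off the movement variables directly. Given feasible $x$ with served counts $s_j:=g_{j,0}(x_0)\le D_j$, I would build a flow $\hat{f}$ that pushes exactly $s_j$ units into each $\mathcal{W}_j$, drawing at most $x_{i\to j,0}$ on edge $(\mathcal{V}_i,\mathcal{W}_j)$ for $i\ne j$ and at most $d_{j,0}-\sum_{k\ne j}x_{j\to k,0}$ on the self-loop (whose cost $c_{j\to j,0}$ is zero, since keeping a taxi in place is free). This routing exists for a clean structural reason: each edge $(\mathcal{V}_i,\mathcal{W}_j)$ feeds the single node $\mathcal{W}_j$, the incoming capacities at $\mathcal{W}_j$ sum to exactly $D_j\ge s_j$, and the capacities leaving each $\mathcal{V}_i$ sum to exactly $d_{i,0}$ (the self-loop capacity being nonnegative by \eqref{const:mass}), so each demand $s_j$ can be met independently while the source capacities hold automatically. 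By construction $\hat{f}$ has value $\sum_j s_j=\bar{f}_+(x)$, and since $c_{i\to i,0}=0$ together with $\hat{f}_{\mathcal{V}_i\mathcal{W}_j}\le x_{i\to j,0}$, its cost is at most $\bar{f}_-(x)$.

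Combining the two directions shows the maximum flow value equals $\max_x\bar{f}_+(x)$ and the minimum cost over maximum flows equals the minimum of $\bar{f}_-$ over the $\bar{f}_+$-maximizers. The cited theorem guarantees that Algorithm~\ref{max_flow_algorithm} returns a maximum flow $f^\star$ of minimum cost; setting $x^\star_{i\to j,0}:=f^\star_{\mathcal{V}_i\mathcal{W}_j}$ then yields a feasible dispatch that attains $\max_x\bar{f}_+$ and, among all such maximizers, minimizes $\bar{f}_-$. I would close with the standard argument that a lexicographic optimum is Pareto optimal: any feasible $x$ dominating $x^\star$ in \eqref{opt} would have to satisfy $\bar{f}_+(x)=\bar{f}_+(x^\star)$, as $x^\star$ already maximizes $\bar{f}_+$, and $\bar{f}_-(x)<\bar{f}_-(x^\star)$, contradicting the cost-minimality of $x^\star$ among the maximizers. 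Hence $x^\star$ is an optimal solution of \eqref{opt}. The genuinely substantive step is the dispatch $\to$ flow construction with its cost bound; the remaining pieces are bookkeeping once that equivalence is in hand.
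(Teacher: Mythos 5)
Your argument is correct, and it actually does more work than the paper, which offers no proof of this proposition at all: the text simply declares the statement ``a direct result'' of the cited Busacker--Gowen theorem, relying on the informal node/edge correspondence in Table~\ref{table2}. What you supply is precisely the piece the paper leaves implicit, and you correctly identify where the content lies: the flow~$\to$~dispatch direction is immediate from conservation at $\mathcal{V}_i$ and $\mathcal{W}_j$, whereas the dispatch~$\to$~flow direction is not, because a feasible $x$ may ship vehicles that arrive and sit idle (so $\sum_i x_{i\to j,0}$ can exceed $r_{j,0}$, which no feasible flow can replicate edge-for-edge). Your rerouting through the self-loop with virtual capacity $d_{j,0}-\sum_{k\ne j}x_{j\to k,0}$ handles this cleanly, and the final lexicographic-implies-Pareto step matches the notion of optimality the paper adopts later around Theorem~\ref{thm:lp}. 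Two assumptions you invoke are worth flagging explicitly, since the paper never states them but the equivalence genuinely needs them: $c_{i\to i,0}=0$ (otherwise the network charges a vehicle for serving a request in its own cell while the dispatch objective does not, and your cost bound $\mathrm{cost}(\hat f)\le \bar f_-(x)$ breaks) and $c_{i\to j,0}\ge 0$ (needed to pass from $\hat f_{\mathcal{V}_i\mathcal{W}_j}\le x_{i\to j,0}$ to the cost inequality). Neither is a gap in your reasoning --- both are physically forced and implicitly assumed by the paper's model --- but making them explicit is a genuine improvement over the source.
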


\subsection{Receding Horizon Optimization}\label{sec:3B}

In this subsection, we extend the idea introduced in the previous subsection to the receding horizon optimization problems for taxi dispatch. 
To set up the receding horizon optimization problem (with prediction horizon $K$) associated with our multi-objective taxi dispatch problem, 
let 
\begin{equation} \label{obj1}
{f}_{+,t} (x):= \sum_{k=t}^{t + K -1} \sum_{i=1}^n   g_{i, k} (x_{k}), \quad t=0,\ldots, T-K,
\end{equation}
and
\begin{equation}\label{obj2}
\tilde{f}_{-, t}(x) := \sum_{k=t}^{t + K-1} h_k (x_k), \quad t=0,\ldots, T-K,
\end{equation}
which are modified from the original objective functions \eqref{opt1} and \eqref{opt2}, respectively.
The objective functions
$f_{+,t}$ and $\tilde{f}_{-,t}$ represent the number of served requests  and the repositioning cost, respectively, from time step $t$
 during the prediction horizon.
 With a slight abuse of notation, we let $x := (x_t, \ldots, x_{t+K-1})$ in the receding horizon setting. 
Since  the number of ride requests for future time steps is not available, 
in time step $t$ we let 
\begin{equation}\label{const:ride}
r_{k} = r_{t}  \quad \forall k=t+1, \ldots, t+K-1.
\end{equation}
This choice is reasonable when the distribution $r_t$ of ride requests  does not change much during the prediction horizon.

We further modify the second objective function $\tilde{f}_{-,t}$ by adding a regularizer to serve as many current requests as possible, rather than expecting uncertain future requests:
\begin{equation}\label{obj3}
{f}_{-, t}(x) := \sum_{k=t}^{t + K-1} h_k (x_k) + \alpha  \sum_{k = t}^{t + K -1} \sum_{i=1}^n (k - t) g_{i,k} (x_k)
\end{equation}
for $t = 0, \ldots, T-K$. 
Here, $\alpha (k - t)$ denotes an artificial cost for serving a request in time step $k$.\footnote{Throughout the paper, the value of $\alpha$ is set to be $100$.} 

Given the observations $d_t, r_{t}$ in the beginning of time step $t$,
the receding horizon optimization problem for taxi dispatch can then be formulated as follows:
\begin{equation}\label{rh_opt}
\begin{split}
\max \quad & \left ( f_{+, t} (x), - f_{-, t} (x) \right )\\
\mbox{s.t.} \quad & \eqref{const}
\end{split}
\end{equation}
where all the constraints must hold for the prediction horizon. 
After obtaining an optimal solution $x^\star := (x_t^\star, \ldots, x_{t+K-1}^\star)$, we use only $x_t^\star$ to dispatch the currently available vehicles, as in the model predictive control (MPC) methods. 
In fact, this problem can be considered to be an MPC problem with state $d_t$, control input $x_t$, and system dynamics~\eqref{const:flow} and constraints~\eqref{const:mass} and \eqref{const:sign}.

\begin{figure}[tb]
	\centering
	\includegraphics[width=3.9in]{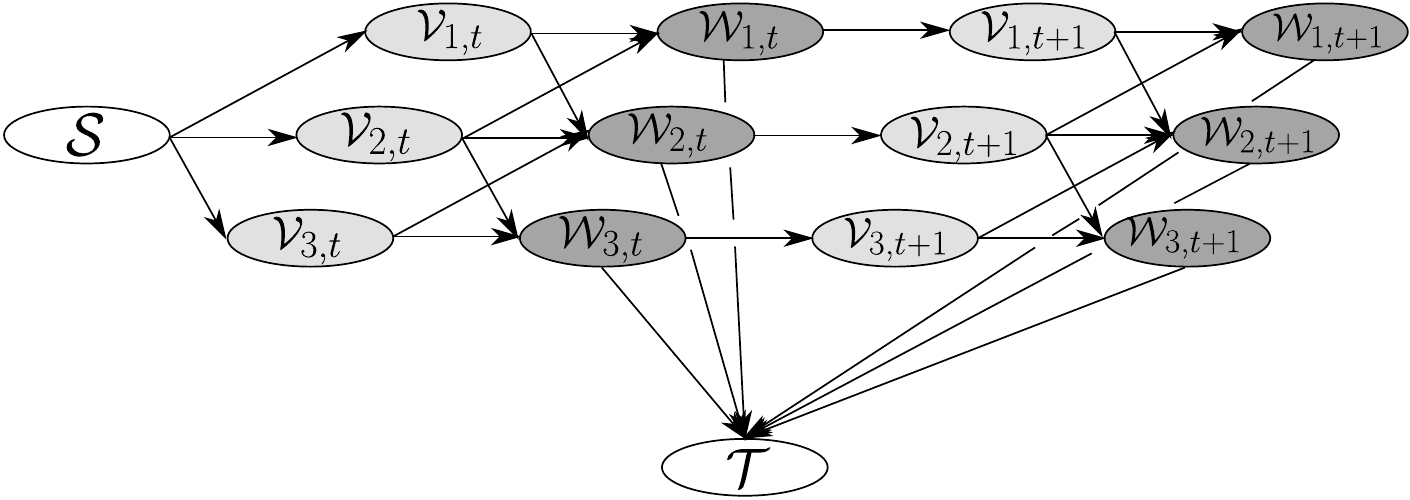}
	\caption{Flow diagram of the receding horizon optimization problem for $K=2$.}
	\label{flow_2}
\end{figure}
 
We now explain the equivalence between \eqref{rh_opt} and the network flow problem, which is similar to that seen in the stationary case.
 Let $\mathcal{V}_{i,k}, \mathcal{W}_{i, k}$ be the copies of $i$th node $\mathcal{N}_{i}$ for time step $k \in \{t,\ldots, t+K-1\}$. As before, the flow can move from $\mathcal{V}_{i, k}$ to $\mathcal{W}_{j, k}$ and from $\mathcal{W}_{i, k}$ to $\mathcal{V}_{i, k+1}$ without limitation, and it can move from $\mathcal{S}$  to $\mathcal{V}_{i,k}$ with a flow capacity $d_{i, k}$. However, since the ride requests can be matched in any of the  time steps, the flow can move from $\mathcal{W}_{i, k}$ to $\mathcal{T}$ with  capacity $r_{i, k}$. Lastly, we again attach the reposition cost $c_{i\to j, k}$ to the edge between $\mathcal{V}_{i,k}$ and $\mathcal{W}_{j,k}$ and the artificial cost $\alpha(k-t)$ to the edge between $\mathcal{W}_{i,k}$ and $\mathcal{T}$. For example, the extended flow diagram for $K=2$ is illustrated in Figure \ref{flow_2}.
Recall that a single time step 
 consists of three stages, as illustrated in Figure~\ref{fig:dispatch}.
We can connect the third stage of each time step to the first stage of the next time step to obtain the extended digraph.
The flow from $\mathcal{S}$ to $\mathcal{V}_{i,t}$ represents the current distribution of idle drivers (in the beginning of time step $t$).
In the subsequent time steps in the prediction horizon, when vehicles are unused in a time step, they become available in the next step.
This relationship is shown as the flow from $\mathcal{W}_{i, k}$ to $\mathcal{V}_{i, k+1}$.
Furthermore, as previously discussed, the flow from $\mathcal{V}_{i, k}$ to $\mathcal{W}_{j, k}$ represents the vehicles' motion in the first stage, and the flow from $\mathcal{W}_{i,k}$ to $\mathcal{T}$ presents taking the ride requests. Thus,
the receding horizon taxi dispatch problem \eqref{rh_opt} is equivalent to an MCMF  problem considered in the extended digraph.  
Again, we can obtain an optimal dispatch solution by applying Algorithm 1 to the network flow problem with the extended graph.
In the following two subsections, we examine two significant properties of the optimal dispatch solution obtained using the MCMF algorithm.

\subsection{Integer Property}

The first property of the optimal dispatch solution $x^\star$ is that all of its entries are integers. 
This integer property is an important advantage of the proposed taxi dispatch method.
In practice, it is physically impossible to use a non-integer dispatch solution. 
Thus, when using other methods that cannot guarantee the integer property,   an additional rounding procedure is required to convert the solution into a vector of integers. 
However, this procedure will introduce an undesirable error and lead to a sub-optimal dispatch solution.

The following theorem holds for a general network flow problem with integer capacity. 
\begin{theorem}\label{T1}
	Suppose the capacities in the graph $G=(V,E)$ are  integers. Then, the optimal flow plan constructed by the MCMF algorithm is a vector of  integers.
\end{theorem}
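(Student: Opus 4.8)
The plan is to argue by induction on the number of iterations of the while loop in Algorithm~\ref{max_flow_algorithm}, carrying as a loop invariant the statement that every flow value $f_{ij}$ is an integer throughout the execution. Since the optimal dispatch variables $x^\star_{i\to j}$ are read off directly as the flows on the edges $(\mathcal{V}_i,\mathcal{W}_j)$, integrality of the flow plan immediately delivers integrality of the dispatch solution, so it suffices to track the $f_{ij}$.

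For the base case, I would note that the initialization in line~1 sets $f_{ij}\gets 0$ and $f_{ji}\gets 0$ for every edge, and likewise $s_{ji}\gets 0$; all of these are integers, and by hypothesis each forward capacity $s_{ij}$ is an integer as well. For the inductive step, assume that at the start of some iteration all $f_{ij}$ are integers. Then every residual capacity $s^f_{ij}=s_{ij}-f_{ij}$ is a difference of integers, hence an integer. The bottleneck value $s^f=\min\{s^f_{ij}\mid (v_i,v_j)\in p\}$ computed in line~4 is thus a minimum of integers, and because $p$ is an augmented path each residual capacity along it is strictly positive; therefore $s^f$ is a \emph{positive} integer, i.e.\ $s^f\ge 1$. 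The updates $f_{ij}\gets f_{ij}+s^f$ and $f_{ji}\gets f_{ji}-s^f$ in lines~6--7 then preserve integrality, closing the induction.

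To finish, I would invoke a termination argument: each iteration raises the net flow value from $\mathcal{S}$ to $\mathcal{T}$ by exactly $s^f\ge 1$, a quantity bounded above by the total capacity of the edges leaving the source, which is a finite integer. Hence the loop executes only finitely many times, and at termination every $f_{ij}$ is an integer by the invariant. The point that needs the most care — and where I expect the argument to be most delicate — is verifying that the bottleneck is a \emph{strictly positive} integer rather than merely a nonnegative one; this single fact guarantees both that the flow values remain integral and that the algorithm makes genuine progress and halts. Finally I would remark that the cost data $c_{ij}$ and the minimum-cost selection rule in line~3 are irrelevant to integrality: they determine only which augmenting path is chosen, not the integrality of the amount pushed along it.
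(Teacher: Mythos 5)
Your proof is correct and follows essentially the same route as the paper's: an induction on the iterations of Algorithm~\ref{max_flow_algorithm} with the invariant that all flow values remain integers, using the fact that residual capacities are differences of integers so the bottleneck value is an integer. Your added observation that the bottleneck is strictly positive (hence the algorithm terminates) is a sound extra detail the paper leaves implicit, but it does not change the substance of the argument.
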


\begin{proof}
	We use an induction on the iteration of augmented paths. Let $p_1,\ldots, p_N$ be minimum cost augmented paths that are selected in each iteration of Algorithm \ref{max_flow_algorithm}. We will show that the flow vector remains as an integer vector after the $n$th iteration is finished for all $n$. In Algorithm \ref{max_flow_algorithm}, we initialize the flow vector as 0. During the first iteration, we update the flow $f_{ij}$ and $f_{ji}$ on the edges of $p_1$ by $s^f=\min\{s_{ij}^f\mid (v_i,v_j)\in p_1\}$. Since the capacities $s_{ij}$ of the edges and the initial flow quantities $f_{ij}\equiv 0$ are assumed to be integers, the residual capacities $s_{ij}^f$ of $p_1$ are also integers. Therefore, the updated entries $f_{ij}+s^f$ and $f_{ji}-s^f$ of the flow vector are also integers after the first iteration. Now, suppose the flow vector $f$ is an integer vector after  the $n$th iteration is finished. During the $(n+1)$th iteration, we update the flow $f_{ij}$ and $f_{ji}$ on the edges of $p_{n+1}$ by $s^f=\min\{s_{ij}^f\mid (v_i,v_j)\in p_{n+1}\}$. By using the same argument as $p_1$, the residual capacities $s_{ij}^f$ of $p_{n+1}$ are also integers. Therefore, the updated entries $f_{ij}+s^f$ and $f_{ji}-s^f$ of the flow vector are again integers.
	Since we repeat the procedure until the algorithm terminates, the resulting optimal flow plan should also be a vector of integers. 
\end{proof}
As a corollary, we have the integer property for the proposed dispatch solution $x^\star$.
\begin{corollary}
	Let $x^\star$ be the optimal dispatch plan obtained by applying the MCMF algorithm to the receding horizon optimization problem~\eqref{rh_opt}. Then, all the entries of  $x^\star$ are integers.  
\end{corollary}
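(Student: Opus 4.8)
The plan is to obtain the corollary as a direct consequence of Theorem~\ref{T1}, by verifying that the extended flow network constructed in Section~\ref{sec:3B} satisfies (possibly after a harmless modification) the integer-capacity hypothesis, and then reading off the integrality of the dispatch variables from the integrality of the optimal flow. First I would recall the equivalence from Section~\ref{sec:3B}: computing an optimal solution of the receding horizon problem~\eqref{rh_opt} amounts to computing a minimum cost maximum flow on the extended digraph with vertices $\mathcal{S}$, $\mathcal{T}$, and the copies $\mathcal{V}_{i,k}, \mathcal{W}_{i,k}$, $k = t, \ldots, t+K-1$. Because the dispatch variable $x_{i\to j,k}^\star$ is by construction exactly the optimal flow on the edge $(\mathcal{V}_{i,k}, \mathcal{W}_{j,k})$, it is enough to show that the whole optimal flow vector produced by Algorithm~\ref{max_flow_algorithm} is integer-valued.

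The only point needing care is that Theorem~\ref{T1} assumes integer capacities, whereas the extended network contains infinite-capacity edges, namely those from $\mathcal{V}_{i,k}$ to $\mathcal{W}_{j,k}$ (when $L_{ij}=1$) and those from $\mathcal{W}_{i,k}$ to $\mathcal{V}_{i,k+1}$. The remaining capacities are integers, since the source capacities $d_{i,k}$ and sink capacities $r_{i,k}$ count drivers and ride requests. To fit the hypotheses of Theorem~\ref{T1}, I would replace each infinite capacity by the integer $M := \sum_{i,k} d_{i,k}$, the sum of all source-edge capacities. As no $\mathcal{S}$--$\mathcal{T}$ flow can exceed the total capacity leaving $\mathcal{S}$, capping the internal edges at $M$ removes no feasible flow, so the set of minimum cost maximum flows is unchanged; and now every edge has integer capacity.

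With integer capacities in force, Theorem~\ref{T1} applies to the modified network and yields that the flow returned by Algorithm~\ref{max_flow_algorithm} is a vector of integers; restricting to the edges $(\mathcal{V}_{i,k}, \mathcal{W}_{j,k})$ shows that $x^\star = (x_{i\to j,k}^\star)$, and in particular the executed block $x_t^\star$, has integer entries. I do not expect any genuine obstacle here, since the real content of the integrality is already carried by Theorem~\ref{T1}; the corollary is essentially a matter of checking that \eqref{rh_opt} meets its hypotheses. As an alternative to the capping step, I would point out that the induction in the proof of Theorem~\ref{T1} in fact never breaks on infinite edges: every augmented path from $\mathcal{S}$ to $\mathcal{T}$ must use a source edge and a sink edge of finite integer capacity, so the bottleneck $s^f$ on each path is the minimum of a set of integers (together with possibly some $+\infty$ entries) and is therefore an integer at every iteration, keeping the flow integral throughout without modifying the network at all.
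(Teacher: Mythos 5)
Your proof is correct and follows essentially the same route as the paper, which simply asserts the corollary as an immediate consequence of Theorem~\ref{T1} applied to the extended flow network equivalent to~\eqref{rh_opt}. Your additional care in handling the infinite-capacity edges (either by capping them at the total source capacity $M=\sum_{i,k}d_{i,k}$, which is harmless since the extended digraph is acyclic and no edge can carry more than the total $\mathcal{S}$--$\mathcal{T}$ flow, or by observing that every augmenting path's bottleneck is set by a finite integer-capacity source or sink edge) is a point the paper glosses over, and both of your ways of closing that gap are valid.
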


\subsection{Time-Greedy Property}

We now highlight another important property of the optimal dispatch solution constructed using the MCMF algorithm. 
Recall that the motivation of introducing the  regularizer  in \eqref{obj3} is to serve as many requests as possible sooner rather than later in order to reduce the negative impacts of future uncertainty. 
In fact, we can show that the optimal dispatch solution from the MCMF algorithm of the receding horizon optimization problem maximizes the number of served requests in the current time step $t$ if the reposition costs are negligible.
To prove this {\it time-greedy} property, we need the following lemma.

\begin{lemma}\cite[Lemma 10]{BusackerRobertG1960APFD}\label{L1}
	The minimal cost augmented path  in Algorithm \ref{max_flow_algorithm} does not include a cycle.
\end{lemma}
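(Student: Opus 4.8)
The plan is to argue by contradiction. Suppose the minimum cost augmented path selected in line~3 of Algorithm~\ref{max_flow_algorithm}, say $p = (\mathcal{S} = v_{i_1}, v_{i_2}, \ldots, v_{i_k} = \mathcal{T})$, revisits some vertex, i.e., $v_{i_a} = v_{i_b}$ with $a < b$. Then the segment $C := (v_{i_a}, \ldots, v_{i_b})$ is a cycle in the residual network, and deleting its edges from $p$ leaves a shorter $\mathcal{S}$-$\mathcal{T}$ path $p' := (v_{i_1}, \ldots, v_{i_a} = v_{i_b}, v_{i_{b+1}}, \ldots, v_{i_k})$. Because the edges of $p'$ form a subset of the edges of $p$, each of them still has positive residual capacity $s^f_{ij} > 0$, so $p'$ is again a valid augmented path; moreover the costs split additively as $\sum_{(v_i,v_j) \in p} c_{ij} = \sum_{(v_i,v_j) \in p'} c_{ij} + \sum_{(v_i,v_j) \in C} c_{ij}$. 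The whole argument then reduces to controlling the sign of the cycle cost $\mathrm{cost}(C) := \sum_{(v_i,v_j) \in C} c_{ij}$.

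The key step is to show $\mathrm{cost}(C) \ge 0$, which I would obtain from the invariant that the residual network never contains a negative cost cycle throughout Algorithm~\ref{max_flow_algorithm}. I would prove this invariant by induction on the augmentation iterations. For the base case, the initial flow is zero, so the residual network coincides with $G$ together with the zero-residual reverse edges; since every forward edge carries a nonnegative cost (the reposition costs $c_{i\to j,k}$, the artificial costs $\alpha(k-t) \ge 0$, and the zero costs on the edges incident to $\mathcal{S}$ and $\mathcal{T}$) and the reverse edges are absent from $G_f$, there is no negative cycle. For the inductive step, I would invoke the standard fact that augmenting along a \emph{minimum cost} augmented path preserves the absence of negative cost cycles: any new residual (reverse) edge created by the augmentation points backward along a shortest path, so a negative cycle using it would contradict the shortest-path optimality of the chosen path. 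An alternative, self-contained route to $\mathrm{cost}(C) \ge 0$ is to note that $C$ is reachable from $\mathcal{S}$ and reaches $\mathcal{T}$ within $p$; were $\mathrm{cost}(C) < 0$, traversing $C$ repeatedly would yield augmented walks of arbitrarily negative cost, so no finite minimizer would exist, contradicting the selection of $p$ in line~3.

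Once $\mathrm{cost}(C) \ge 0$ is in hand, the conclusion is immediate: minimality of $p$ gives $\mathrm{cost}(p') \ge \mathrm{cost}(p)$, while the additive splitting gives $\mathrm{cost}(p') = \mathrm{cost}(p) - \mathrm{cost}(C) \le \mathrm{cost}(p)$; hence $\mathrm{cost}(C) = 0$ and $p'$ is itself a minimum cost augmented path with strictly fewer edges. Iterating the deletion removes every cycle in finitely many steps and produces a simple minimum cost augmented path, so the minimal cost augmented path contains no cycle. I expect the main obstacle to be the inductive justification of the no-negative-cycle invariant (equivalently, ruling out unbounded cost descent); the cycle-excision bookkeeping and the additivity of the cost are routine by comparison.
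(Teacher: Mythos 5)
The paper does not actually prove Lemma~\ref{L1}: it imports the statement verbatim from Lemma~10 of the Busacker--Gowen reference, so there is no in-paper argument to measure yours against, and supplying a proof is a genuine addition. Taken on its own terms, your skeleton is the standard one and is sound: excise a cycle $C$ from a vertex-repeating minimizer $p$, note that the surviving edges of $p'$ still have positive residual capacity so $p'$ is again an augmented path, use additivity of the cost, and reduce everything to $\mathrm{cost}(C)\ge 0$. Of your two routes to that inequality, the second one (if $\mathrm{cost}(C)<0$, winding around $C$ arbitrarily often gives augmented walks of unboundedly negative cost, so the minimizer selected in line~3 could not exist) is the cleaner and fully self-contained choice; the first route via the no-negative-residual-cycle invariant is correct but its inductive step is only gestured at --- making it rigorous requires node potentials/reduced costs, or the cancellation argument that a negative residual cycle created by the augmentation would combine with the chosen path into a cheaper augmented walk in the pre-augmentation residual network. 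You correctly identify that this is where the real work would sit, which is precisely why the unboundedness argument is preferable here.

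One caveat is worth naming: what your argument actually delivers is that every cycle on a minimum-cost augmented walk has cost exactly zero, hence that a \emph{simple} minimizer exists; it does not show that the walk selected in line~3 is itself cycle-free. Zero-cost residual cycles do occur in this network --- in the setting of Theorem~\ref{T2}, where $c=0$, a cycle alternating forward edges $(\mathcal{V}_{i,k},\mathcal{W}_{j,k})$ with residual reverse edges $(\mathcal{W}_{j',k},\mathcal{V}_{i',k})$ has cost zero --- so a non-simple minimizer can coexist with a simple one, and the proof of Theorem~\ref{T2} genuinely needs the selected $p_n$ to contain no cycle. The fix is one sentence: either stipulate that line~3 breaks ties in favor of a minimizer with fewest edges (your excision then shows the selected path is simple), or adopt the convention that an augmented path is a simple path, in which case your argument becomes the statement that restricting to simple paths loses no cost, which is the substantive content of the lemma.
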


By using Lemma~\ref{L1}, we can show the following time-greedy property of our optimal dispatch solution.

\begin{theorem}\label{T2}
	Suppose there is no reposition cost, i.e., $c =0$, and that $\alpha > 0$. Let $x^\star := (x_t^\star, \ldots, x_{t + K-1}^\star)$ be the optimal dispatch solution that is obtained by applying
 the MCMF algorithm to the receding horizon optimization problem~\eqref{rh_opt}.
 Then, $x_t^\star$ maximizes the number of served requests in time step $t$.
\end{theorem}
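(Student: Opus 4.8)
The plan is to prove that $N_t(x) := \sum_{i=1}^n g_{i,t}(x_t)$, the number of requests served in the current step, is maximized over all feasible dispatches at $x=x^\star$, i.e.\ $N_t(x^\star)\ge N_t(x)$ for every feasible $x$. I would argue entirely inside the extended flow network of Figure~\ref{flow_2}, where $N_t$ equals the total flow leaving the time-$t$ copies $\mathcal{W}_{i,t}$ through the exit edges $(\mathcal{W}_{i,t},\mathcal{T})$. First I would identify $M := \max_x N_t(x)$ with the maximum-flow value of the \emph{time-$t$ slice}: the subnetwork on $\mathcal{S}$, $\{\mathcal{V}_{i,t}\}$, $\{\mathcal{W}_{j,t}\}$, $\mathcal{T}$ with capacities $d_{i,t}$ on $(\mathcal{S},\mathcal{V}_{i,t})$, $+\infty$ on the admissible $(\mathcal{V}_{i,t},\mathcal{W}_{j,t})$, and $r_{j,t}$ on $(\mathcal{W}_{j,t},\mathcal{T})$. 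Since no carry-over edge enters a time-$t$ node (step $t-1$ lies outside the horizon), this slice is fed only by the source, and its flows correspond exactly to the feasible time-$t$ dispatches subject to \eqref{const:mass} and \eqref{const:sign}: routing the served vehicles of any dispatch yields a slice flow of value $N_t(x)$, so $N_t(x)\le M$, and the slice maximum is realized by some dispatch.

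Next I would analyze the augmenting paths produced by Algorithm~\ref{max_flow_algorithm} under the assumption $c=0$. In that case the only nonzero edge costs are the regularizer costs $\alpha(k-t)$ on the exit edges $(\mathcal{W}_{i,k},\mathcal{T})$. By Lemma~\ref{L1} each augmenting path is simple and terminates at $\mathcal{T}$; hence it can never traverse a reverse exit edge $(\mathcal{T},\mathcal{W}_{i,k})$, since that edge leaves $\mathcal{T}$, and so it crosses into $\mathcal{T}$ through exactly one forward exit edge. Consequently the cost of an augmenting path equals $\alpha(k-t)$ for its unique exit time $k$, which is $0$ precisely when $k=t$; and a cost-$0$ path is necessarily confined to the time-$t$ slice, because a carry-over edge, though free, would force a strictly later and therefore strictly costlier exit.

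I would then combine two structural facts. First, by the successive–shortest–path property of the Busacker–Gowen scheme \cite{BusackerRobertG1960APFD}, the costs of the augmenting paths chosen by Algorithm~\ref{max_flow_algorithm} are non-decreasing; hence every cost-$0$ augmentation is performed before any positive-cost one. This initial phase is exactly Ford–Fulkerson applied to the time-$t$ slice, and it halts only when no cost-$0$ augmenting path remains, which by max-flow/min-cut means the slice then carries its maximum flow $M$, so $N_t=M$ at the end of the phase. Second, the same simple-path observation shows that no augmenting path ever traverses a reverse exit edge, so the flow on each $(\mathcal{W}_{i,k},\mathcal{T})$ is non-decreasing across iterations; therefore the positive-cost augmentations performed afterward cannot erode $N_t$, and it remains equal to $M$. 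Thus the terminal solution satisfies $N_t(x^\star)=M=\max_x N_t(x)$, which is the time-greedy property.

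The hard part will be justifying, cleanly and rigorously, that the cost-$0$ phase drives the time-$t$ slice to its \emph{global} maximum $M$ rather than to a merely locally maximal configuration. This rests on two ingredients that must be reconciled: the non-decreasing ordering of augmenting-path costs (so that no cost-$0$ opportunity can be created after a positive-cost push), and the identification of cost-$0$ augmenting paths with augmenting paths of the isolated slice (so that their exhaustion is equivalent to slice-optimality). The free-but-slice-leaving carry-over edges are the subtlety to dispatch here; once one notes that using any of them forces a costlier exit, the reachable cost-$0$ portion collapses onto the slice and the argument closes.
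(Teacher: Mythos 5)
Your proposal is correct and follows essentially the same route as the paper's proof: both use Lemma~\ref{L1} to rule out reverse exit edges $(\mathcal{T},\mathcal{W}_{i,k})$, deduce that all augmenting-path costs are nonnegative and vanish exactly when the path exits through a time-$t$ edge, show that the zero-cost augmentations are confined to the time-$t$ slice and drive it to its maximum flow, and then observe that the remaining positive-cost augmentations cannot decrease the time-$t$ exit flow. The only cosmetic difference is that you phrase the slice-optimality step via max-flow/min-cut and the successive-shortest-path monotonicity of Busacker--Gowen, whereas the paper runs an explicit induction showing every zero-cost path has the form $(\mathcal{S},\mathcal{V}_{i,t},\mathcal{W}_{j,t},\mathcal{T})$ and identifies the resulting flow with the $K=1$ solution.
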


\begin{proof}
	Let $N$ be a total number of iterations in Algorithm \ref{max_flow_algorithm} and $p_1,\ldots, p_N$ be the selected augmented paths with minimum cost at each iteration. We first show that there is no edge of the form $(\mathcal{T},\mathcal{W}_{i,k})$ in every $p_n$. Suppose there exists  $(i,k)$ such that $(\mathcal{T},\mathcal{W}_{i,k})\in p_n$ for some $n$. On the other hand, by definition, every augmented path should end with $\mathcal{T}$. Hence, there exists another index pair $(j,l)$ such that $p_n$ ends with the edge $(\mathcal{W}_{j,l},\mathcal{T})$. Therefore, the path $p_n$ should have a cycle of the form $(\mathcal{T},\mathcal{W}_{i,k},\ldots,\mathcal{W}_{j,l},\mathcal{T})$. This contradicts to Lemma \ref{L1};  therefore any minimum cost augmented path $p_n$ should not have an edge of the form $(\mathcal{T},\mathcal{W}_{i,k})$. 
			
			Next, we show that the cost of $p_n$ is nonnegative. Suppose that the cost of $p_n$ is negative for some $n$. Since $c=0$, the cost is imposed only on the edges of the form $(\mathcal{W}_{i,k},\mathcal{T})$ where $t<k\le t+K-1$. Therefore, in order for $p_n$ to have a negative cost, it should have at least one edge of the form $(\mathcal{T},\mathcal{W}_{i,k})$. This contradicts to the fact that $p_n$ does not have an edge of the form $(\mathcal{T},\mathcal{W}_{i,k})$. Thus, the cost of $p_n$ is nonnegative for all $n =1, \ldots, N$.
			
			Since the costs of $p_n$ are nonnegative,   two cases are possible. The first case is that every $p_n$ has zero cost, and the second case is that there exists $1\le n_0\le N$ such that the costs of $p_1,\ldots, p_{n_0-1}$ are $0$ and the cost of $p_{n_0}$ is positive.
			
			For the first case, we will show, by induction, that every $p_n$ can only include $\mathcal{V}_{i,t}$ or $\mathcal{W}_{j,t}$, i.e., the nodes in time step $t$. 
			Let us first consider $p_1$. Since the flow was initialized as $0$ in Algorithm \ref{max_flow_algorithm}, and the capacities of $(\mathcal{W}_{j,k},\mathcal{V}_{i,k})$ are set to be 0, $p_1$ cannot contain an edge of the form $(\mathcal{W}_{j,k},\mathcal{V}_{i,k})$ because $s^f_{\mathcal{W}_{j,k}\mathcal{V}_{i,k}}=0$. Recall that the cost of $p_1$ was 0, which implies $(\mathcal{W}_{j,t},\mathcal{T})\in p_1$. Therefore, the only possible form of $p_1$ is $(\mathcal{S},\mathcal{V}_{i,t},\mathcal{W}_{j,t},\mathcal{T})$ for some $i,j$, and the first induction step holds for $p_1$. We now assume that $p_1,\ldots, p_n$ only include $\mathcal{V}_{i,t}$ or $\mathcal{W}_{j,t}$ and do not include $\mathcal{V}_{i,t+1}$. Since the edge $(\mathcal{W}_{j,t},\mathcal{V}_{i,t+1})$ is not contained in any of $p_1,\ldots, p_n$, the flow on it is still 0. Hence, $p_{n+1}$ cannot have any edge of the form $(\mathcal{V}_{i,t+1},\mathcal{W}_{j,t})$, which implies that once $p_{n+1}$ arrives at $\mathcal{V}_{i,t+1}$, it cannot move back to $\mathcal{W}_{j,t}$. However, since $p_{n+1}$ also has zero cost, it should end with the edge $(\mathcal{W}_{j,t},\mathcal{T})$. Therefore, $p_{n+1}$ can only include $\mathcal{V}_{i,t}$ or $\mathcal{W}_{j,t}$ and the induction holds for all $n =1, \ldots, N$. Until now, we have shown that every minimal cost augmented path $p_n$ contains only the vertices and edges of time step $t$. Therefore, in this case, the optimal dispatch solution is $x^\star=(x_t^\star,0,\ldots,0)$, and $x_t^\star$ is nothing but the optimal dispatch solution when $K=1$. Since the optimal dispatch solution when $K=1$ maximizes the number of served requests in time step $t$, the optimal dispatch solution $x_t^\star$ also maximizes the number of served requests in time step $t$.
			
For the second case. since $p_{n_0}$ is the first minimal cost augmented path with a positive cost, after the $(n_0-1)$th iteration is finished in Algorithm \ref{max_flow_algorithm}, there is no augmented path with zero cost. As in the first case, this implies that the dispatch solution after $(n_0-1)$th iteration should maximize the number of served requests in time step $t$. However, since the minimum cost augmented paths $p_n$ do not contain the edge of the form $(\mathcal{T},\mathcal{W}_{j,t})$, the number of served requests in time step $t$ cannot decrease during the iteration $n = n_0, \ldots, N$. Therefore, at the end of the iteration, the optimal dispatch solution $x^\star$ also maximizes the total number of served requests in time step $t$.
\end{proof}

\begin{remark}
	The time greedy property provides a worst-case performance guarantee in the following sense: the proposed solution $x_t^\star$ at least maximizes the number of served requests in time step $t$ even when the number of ride requests for future time steps is very different from the data used in the receding horizon optimization problem~\eqref{rh_opt}. 
On the other extreme in which the data well approximates the requests
 for future time steps in the prediction horizon, the proposed solution should perform even better based on the definition of the receding horizon optimization problem~\eqref{rh_opt}. 
 These features justify the effectiveness of the proposed MPC-based method.
\end{remark}

\subsection{Linear Programming Method}

To implement the proposed taxi dispatch method, 
the receding horizon optimization problem \eqref{rh_opt} must be solved within a fraction of the interval between two consecutive time steps. 
However, the running time for the MCMF algorithm is often long, particularly for the large-scale fleet management with a nontrivial prediction horizon.
To reduce the computation time to solve the taxi dispatch problem, 
we propose a linear programming (LP) approach.

We first note that the receding horizon optimization problem~\eqref{rh_opt} 
 is  a multi-objective optimization problem.
Since maximizing the number of served ride requests $f_{+, t}(x)$, and minimizing the total cost $f_{-,t}(x)$ are conflicting objectives, the set of optimal points of this problem forms a \textit{Pareto frontier}, at which there is no other solution that serves a greater number of requests with less cost. 
 Among the points on the Pareto frontier, we  focus on the point at which the number of served requests is maximized and  the cost is minimized (see Figure \ref{pareto}).
  This belongs to the so-called \textit{a priori} method for solving  multi-object optimization problems.

\begin{figure}[tb]
\centering
	\includegraphics[width=3in]{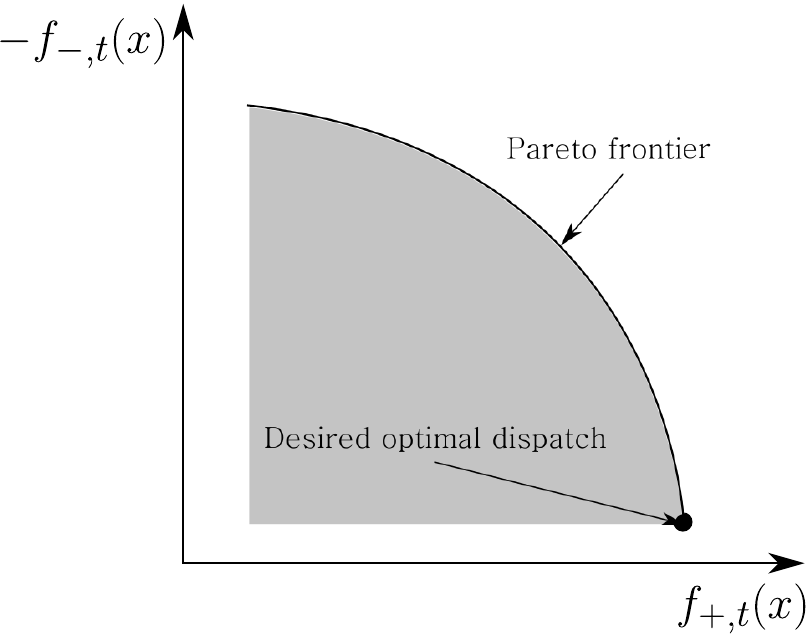}
	\caption{Pareto frontier and the desired optimal point.}
	\label{pareto}
\end{figure}

 The scalarizing technique is often used to find the desired optimal point with the \textit{a priori} selected preference~\cite{HwangMasud79}. 
Instead of considering the object functions $f_{+,t}$ and $-f_{-, t}$, we 
form the following optimization problem with a weighted sum of the two objectives ($\gamma > 0$):
\begin{equation} \label{scalar}
\begin{split}
\max_x \quad &
f_{+, t} (x) - \gamma f_{-, t} (x)\\
\mbox{s.t.} \quad & d_{i, k+1} = d_{i, k}+\sum_{j=1}^n  x_{j \to  i, k}- \sum_{l=1}^n  x_{i\to l, k} - g_{i, k} (x_k)\\
&\sum_{j=1}^n x_{i \to j, k} \leq d_{i, k}\\
& x_{i \to j, k}\geq 0, 
\end{split}
\end{equation}
where all the constraints are satisfied for $i,j=1, \ldots, n$ and $k = t, \ldots, t+K-1$. 
If  the positive scaling parameter $\gamma$ is sufficiently small,  an optimal solution of the reformulated problem~\eqref{scalar} is  the desired optimal dispatch plan.

\begin{theorem}\label{thm:lp}
		Suppose that 
		\[
		0 < \gamma < \frac{1}{\|f_{-,t} \|_{L^\infty}}.
		\]
		Let $\mathcal{X}$ be the feasible set of \eqref{scalar},
		 and let $x^\star$ be a maximizer of $(f_{+,t}(x)-\gamma f_{-,t}(x))$ in $\mathcal{X}$, i.e.,
		\[x^\star \in \argmax_{x\in \mathcal{X}}\left\{f_{+,t} (x)- \gamma f_{-,t} (x)\right\}.\]
		Then, 
		\[
		x^\star \in \argmax_{x\in\mathcal{X}}\left\{f_{+,t}(x)\right\}.
		\]
		 Moreover, if we let $\mathcal{X}_+ :=\arg\max_{x\in\mathcal{X}}\{f_{+,t}(x)\}$, then
		\[
		x^\star \in \argmin_{x\in \mathcal{X}_+ }\left\{f_{-,t}(x)\right\}.
		\]
		 Therefore, $x^\star$ is the desired dispatching solution that minimizes the reposition cost among the solutions maximizing the  number of served requests. 
	\end{theorem}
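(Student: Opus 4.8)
The plan is to read \eqref{scalar} as a weighted-sum scalarization of a two-level (lexicographic) objective in which serving requests has strict priority over saving cost, and to show that the scalarized maximizer recovers exactly this lexicographic optimum once $\gamma$ is small. I would establish the two displayed inclusions in order: first $x^\star\in\argmax_{x\in\mathcal X}f_{+,t}$ by contradiction, then $x^\star\in\argmin_{x\in\mathcal X_+}f_{-,t}$ directly. Two structural facts drive the argument. The first is nonnegativity: since $c=0$ or more generally $c\ge 0$, $\alpha>0$, and $g_{i,k}\ge 0$ in \eqref{obj3}, we have $0\le f_{-,t}(x)\le\|f_{-,t}\|_{L^\infty}$ for every feasible $x$. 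The second, and the crucial one, is that distinct attainable values of $f_{+,t}$ differ by at least one.

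For the first inclusion, suppose $x^\star$ maximizes $f_{+,t}-\gamma f_{-,t}$ but is \emph{not} a maximizer of $f_{+,t}$, so there is $\tilde x\in\mathcal X$ with $f_{+,t}(\tilde x)>f_{+,t}(x^\star)$. The unit-gap property then gives $f_{+,t}(\tilde x)-f_{+,t}(x^\star)\ge 1$. Optimality of $x^\star$ for the scalarized objective reads $f_{+,t}(x^\star)-\gamma f_{-,t}(x^\star)\ge f_{+,t}(\tilde x)-\gamma f_{-,t}(\tilde x)$, which rearranges to
\[
\gamma\bigl(f_{-,t}(\tilde x)-f_{-,t}(x^\star)\bigr)\ge f_{+,t}(\tilde x)-f_{+,t}(x^\star)\ge 1 .
\]
Using $f_{-,t}(x^\star)\ge 0$ and $f_{-,t}(\tilde x)\le\|f_{-,t}\|_{L^\infty}$, the left-hand side is at most $\gamma\|f_{-,t}\|_{L^\infty}<1$ by the hypothesis on $\gamma$, a contradiction. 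Hence $f_{+,t}(x^\star)=\max_{x\in\mathcal X}f_{+,t}(x)$, i.e. $x^\star\in\mathcal X_+$.

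The second inclusion is then immediate. Since $x^\star\in\mathcal X_+$, every $x\in\mathcal X_+$ satisfies $f_{+,t}(x)=f_{+,t}(x^\star)$. If some $\hat x\in\mathcal X_+$ had $f_{-,t}(\hat x)<f_{-,t}(x^\star)$, then $f_{+,t}(\hat x)-\gamma f_{-,t}(\hat x)>f_{+,t}(x^\star)-\gamma f_{-,t}(x^\star)$, contradicting that $x^\star$ maximizes the scalarized objective; thus $x^\star$ minimizes $f_{-,t}$ over $\mathcal X_+$, which is the desired dispatch plan.

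The main obstacle is justifying the unit gap between attainable values of $f_{+,t}$, since on the continuous feasible set of the LP the net inflow, and hence each $g_{i,k}$, can a priori be fractional, so $f_{+,t}$ need not be integer-valued at an arbitrary feasible point. I would close this by invoking the network-flow structure together with Theorem~\ref{T1}: the constraint matrix is totally unimodular, so both the scalarized problem \eqref{scalar} and the pure problem $\max_{\mathcal X} f_{+,t}$ attain their optima at integer vertices, at which $f_{+,t}$ (the total served flow) is integer-valued. Carrying out the first inclusion with $x^\star$ taken as an integer maximizer of \eqref{scalar} and $\tilde x$ an integer maximizer of $f_{+,t}$ (both guaranteed by Theorem~\ref{T1}) makes $f_{+,t}(x^\star)$ and $f_{+,t}(\tilde x)$ integers, so a strict increase is an increase by at least one, exactly as needed. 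I would also verify that $\|f_{-,t}\|_{L^\infty}:=\sup_{x\in\mathcal X}f_{-,t}(x)$ is finite and positive: $\mathcal X$ is compact because the flows are bounded through \eqref{const:mass} by the (conserved, hence bounded) driver counts, and $f_{-,t}$ is continuous, so the bound $0<\gamma<1/\|f_{-,t}\|_{L^\infty}$ is well posed.
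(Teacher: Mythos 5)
Your argument is correct and is essentially the paper's own proof: both establish $x^\star\in\mathcal X_+$ by contradiction using the chain $f_{+,t}(x^\star)-\gamma f_{-,t}(x^\star)\le f_{+,t}(x^\star)\le f_{+,t}(\tilde x)-1< f_{+,t}(\tilde x)-\gamma f_{-,t}(\tilde x)$ (you merely rearrange it as $\gamma\bigl(f_{-,t}(\tilde x)-f_{-,t}(x^\star)\bigr)\ge 1$ versus $\gamma\|f_{-,t}\|_{L^\infty}<1$), and then compare $x^\star$ against an arbitrary element of $\mathcal X_+$ for the second inclusion. The only substantive difference is that you explicitly justify the unit gap via integrality of optimal vertices (Theorem~\ref{T1} and the network-flow structure), whereas the paper simply asserts that $f_{+,t}$ is integer-valued; your extra care is warranted, since $f_{+,t}$ can be fractional at non-vertex feasible points and the paper's one-line justification quietly assumes the maximizers under comparison are integral.
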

	
	\begin{proof}
Recall that $f_{+,t}(x)$ is the total number of served requests, which is an integer. We choose any $\tilde{x}\in \mathcal{X}_+$ and suppose that $x^\star \in \mathcal{X} \setminus \mathcal{X}_+$. We then have 
\begin{equation}\nonumber
\begin{split}
f_{+,t}(x^\star )-\gamma f_{-,t}(x^\star)&\le f_{+,t}(x^\star)\\
&\le f_{+,t}(\tilde{x})-1\\
&< f_{+,t}(\tilde{x})-\gamma f_{-,t}(\tilde{x}). 
\end{split}
\end{equation}
Therefore, $x^\star$ does not maximize $f_{+,t}(x)-\gamma f_{-,t}(x)$ over $\mathcal{X}$. This is a contradiction. 
Hence, we have $x^*\in \mathcal{X}_+$. 
Moreover, for any $\bar{x}\in \mathcal{X}_+$, we have 
\[
f_{+,t}(x^\star)-\gamma f_{-,t}(x^\star )\ge f_{+,t}(\bar{x})-\gamma f_{-,t}(\bar{x}).
\]
 Since both $x^*$ and $\bar{x}$ maximize $f_{+,t}(x)$ over $\mathcal{X}$, we have $f_{+,t}(x^\star)=f_{+,t}(\bar{x})$. Thus, we conclude that $f_{-,t}(x^\star)\le f_{-,t}(\bar{x})$. This implies $x^\star$ minimizes $f_{-,t}(x)$ over $\mathcal{X}_+$.
	\end{proof}


By introducing a slack variable $z$, we can reformulate the problem \eqref{scalar} as a linear program.
\begin{proposition}\label{prop:lp}
Suppose that
\[
0 < \gamma< \frac{1}{\alpha(K-1)}.
\]
Then, the scalarized problem~\eqref{scalar} is equivalent to the following LP:
\begin{align}
\begin{aligned}\label{lp}
\max_{x,z} \; &\sum_{k=t}^{t+K-1} \sum_{i=1}^n z_{i, k} - \gamma \sum_{k=t}^{t + K-1} \bigg (\sum_{i,j=1}^n c_{i \to j, k} x_{i\to j,k}+\alpha \sum_{i=1}^n z_{i, k}(k-t)\bigg )\\
\mbox{s.t.} \quad &z_{i, k} \leq r_{i, k}\\
&d_{i, k+1}\leq d_{i, k}+\sum_{j=1}^n  x_{j \to  i, k}- \sum_{l=1}^n  x_{i\to l, k} - z_{i, k} \\
&\sum_{j=1}^n x_{i \to j, k} \leq d_{i, k}\\
& x_{i \to j, k}\geq 0, 
\end{aligned}
\end{align}
where all the constraints are satisfied for $i,j=1, \ldots, n$ and $k = t, \ldots, t+K-1$, and $d_{i, t + K}$ is defined as $0$. 
\end{proposition}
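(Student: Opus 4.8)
The plan is to prove the equivalence by sandwiching the optimal values of \eqref{scalar} and \eqref{lp} and checking that an optimizer of one yields an optimizer of the other. The key algebraic observation is that, after substituting the definitions of $f_{+,t}$, $f_{-,t}$, and $h_k$, the objective of \eqref{lp} regroups as $\sum_{k=t}^{t+K-1}\sum_{i=1}^n\big(1-\gamma\alpha(k-t)\big)z_{i,k}-\gamma\sum_{k=t}^{t+K-1}\sum_{i,j=1}^n c_{i\to j,k}x_{i\to j,k}$, and identically for \eqref{scalar} with each $z_{i,k}$ replaced by $g_{i,k}(x_k)$. Since $k-t$ ranges over $0,\ldots,K-1$, the hypothesis $\gamma<1/(\alpha(K-1))$ is precisely what makes every coefficient $1-\gamma\alpha(k-t)$ strictly positive (and, being of the form $1-\gamma\alpha(k-t)$, strictly decreasing in $k$); this positivity is the engine of the whole argument. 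Throughout I write $\nu_{i,k}:=d_{i,k}+\sum_{j=1}^n x_{j\to i,k}-\sum_{l=1}^n x_{i\to l,k}$ for the net number of drivers in cell $i$ at step $k$, so that $g_{i,k}(x_k)=\min\{\nu_{i,k},r_{i,k}\}$.

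First I would dispatch the easy direction. Given any $x$ feasible for \eqref{scalar}, set $z_{i,k}:=g_{i,k}(x_k)$ and take the intermediate stocks $d_{i,k}$ from the exact dynamics \eqref{const:flow}. Then $z_{i,k}\le r_{i,k}$ because $g_{i,k}$ is a minimum; the dynamics constraint of \eqref{lp} holds with equality for $k<t+K-1$; and for $k=t+K-1$ the terminal convention $d_{i,t+K}=0$ reduces that constraint to $z_{i,t+K-1}\le\nu_{i,t+K-1}$, which again follows from $g_{i,k}\le\nu_{i,k}$. The mass and sign constraints are unchanged, and since $z=g$ the two objectives coincide. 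Hence the optimal value of \eqref{lp} is at least that of \eqref{scalar}.

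The harder direction is to take an optimizer $(x^\star,z^\star,d^\star)$ of \eqref{lp} and show $x^\star$ is feasible for \eqref{scalar} with no smaller objective. Here I would argue that the relaxation is tight at optimality. Because each $z_{i,k}$ carries a strictly positive weight, at the optimum $z^\star_{i,k}$ must equal $\min\{r_{i,k},\,\nu^\star_{i,k}-d^\star_{i,k+1}\}$ (otherwise it could be raised), and no drivers are wastefully discarded, so $d^\star_{i,k+1}$ coincides with the greedy leftover $(\nu^\star_{i,k}-r_{i,k})^+$ prescribed by \eqref{const:flow}. Together these give $z^\star_{i,k}=g_{i,k}(x^\star_k)$ and show $x^\star$ obeys the exact dynamics, so it is feasible for \eqref{scalar} with the same objective value. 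Combined with the first direction, this forces the two optimal values to be equal and identifies the optimizers.

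I expect the main obstacle to be making the \emph{tightness of the dynamics} precise, because $d^\star_{i,k+1}$ is coupled across time: raising the stock $d^\star_{i,k+1}$ loosens the bound on $z^\star_{i,k+1}$ but tightens the bound on $z^\star_{i,k}$, so a one-step local argument does not close. I would resolve this with a short exchange argument exploiting that the weights $1-\gamma\alpha(k-t)$ are strictly decreasing in $k$: any feasible solution that carries a driver forward in order to serve a request later, rather than serving an available request now, can be modified to serve earlier without decreasing the objective, which drives the stocks down to their greedy leftover values and forces $z^\star=g$. A little extra care is then needed to confirm that $x^\star$, which satisfies the mass constraint against $d^\star$, stays feasible against the recursively reconstructed stocks; this is automatic once $d^\star$ has been shown to follow the exact dynamics.
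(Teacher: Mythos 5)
Your argument is essentially correct, but note that the paper does not actually prove this proposition: it is dismissed in one line as ``the standard result of equivalent forms using slack variables,'' so there is no argument of the paper's to compare yours against, and what you supply is a genuine completion. You correctly locate the one place where the reformulation is \emph{not} the textbook epigraph substitution: the slack $z_{i,k}$ does not only replace $g_{i,k}(x_k)$ in the objective, it also enters the relaxed dynamics $d_{i,k+1}\le \nu_{i,k}-z_{i,k}$ (in your notation $\nu_{i,k}$ for the post-relocation driver count), so lowering $z_{i,k}$ buys extra carried-forward drivers and a purely local ``push $z$ up to the min'' step does not close the argument. Your exchange argument is the right repair, and the hypothesis $0<\gamma<1/(\alpha(K-1))$ enters exactly where you say: it makes every weight $w_k:=1-\gamma\alpha(k-t)$ strictly positive and strictly decreasing in $k$, so deferring a currently available request to be served at best one step later changes the objective by at most $-(w_k-w_{k+1})=-\gamma\alpha<0$ per unit while the freed repositioning cost is nonnegative; this forces $d^\star_{i,k+1}\le\max\{\nu^\star_{i,k}-r_{i,k},\,0\}$ at every optimum and hence, by induction in $k$, $d^\star_{i,k}\le d^{\mathrm{true}}_{i,k}$ and $z^\star_{i,k}\le g_{i,k}(x^\star_k)$, which is all the hard direction needs. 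One overstatement to fix: optimality does not force $d^\star$ to \emph{equal} the greedy leftover of \eqref{const:flow}, since discarding a driver who would never serve a request nor relocate again is objective-neutral under the inequality form of the dynamics; only the upper bound is forced, and fortunately that upper bound, together with $z^\star\le g(x^\star)$ and $\sum_j x^\star_{i\to j,k}\le d^\star_{i,k}\le d^{\mathrm{true}}_{i,k}$, is exactly what the feasibility and objective comparisons require. I would also state the exchange as a strict improvement by $\gamma\alpha\,\delta>0$ rather than ``without decreasing the objective,'' since the strict version shows that \emph{every} LP optimizer, not merely some optimizer, projects to an optimizer of \eqref{scalar}.
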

It is the standard result of equivalent forms using slack variables in optimization; thus, we omitted the proof.

\begin{table}[tb]
	\centering
	\normalsize
	\begin{tabular}{ l | p{0.85in}   | p{0.85in} }
		\hline
		$K$&MCMF&LP
		\\
		\hline
				\hline
		1&0.232&0.033\\
		2&2.058&0.045\\
		4&17.292&0.068\\
		8&246.642&0.255\\
		16&997.343&1.129\\
		30&memory dump&2.491\\
		\hline
	\end{tabular}
	\caption{Computation time (in seconds) of the MCMF algorithm (Algorithm 1) and the LP method for solving \eqref{rh_opt} with different prediction horizon lengths.}
	\label{1023computation}
\end{table}

By  Theorem~\ref{thm:lp} and Proposition~\ref{prop:lp},
 the LP \eqref{lp} with $\gamma \in (0, \min \{\frac{1}{\|f_{-,t} \|_{L^\infty}}, \frac{1}{\alpha(K-1)}\} )$ is  equivalent to the original multi-objective receding horizon optimization problem.\footnote{We choose $\gamma = 10^{-5}$ and $\alpha = 100$ for simulations in the next section.}
The reformulated optimization problem~\eqref{lp} can be solved using several existing LP algorithms, such as simplex and interior point methods (see, e.g., \cite{Dantzig1998, Bertsimas1997, Vanderbei2015} and the references therein). 
In this present work, we used CPLEX among various LP solvers.\footnote{The default algorithm in CPLEX uses the interior point method which does not guarantee an integer solution. If the solution obtained by CPLEX includes a non-integer value, we can run a simplex algorithm to obtain an integer-valued optimal solution.}
It is difficult to efficiently accelerate 
the MCMF algorithm, whereas we can parallelize the LP when it is implemented with CPLEX \cite{manual1987ibm}.
As seen in Table \ref{1023computation}, we compared the running time of the MCMF algorithm with that of the proposed LP method when solving the receding horizon optimization problem~\eqref{rh_opt} for a single time step with multiple prediction horizon lengths $K$. 
The LP method scales better with the prediction horizon length than the MCMF algorithm. 
When $K=16$, the LP method is approximately 880 times faster than the MCMF algorithm. 
This result demonstrates the computational efficiency of the proposed LP approach, which enables the implementation of \eqref{rh_opt} in a receding horizon fashion.

\section{Case Studies}
\label{sec:Experiment results}

\subsection{Experiment Setup}

\begin{figure}[tb]
\centering
	\includegraphics[width=3.3in]{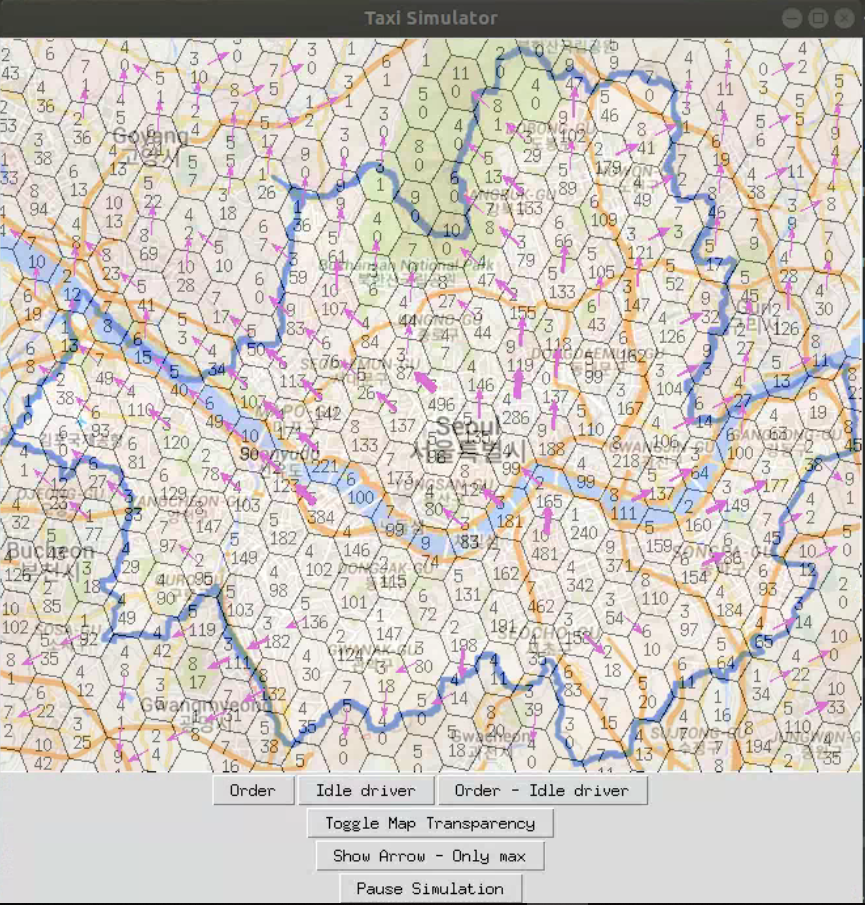}
	\caption{Simulator interface with arbitrary data. The numbers in each node represent the number of ride requests and the number of idle drivers, respectively. Each arrow indicates the vehicles' flow, and its width represents the amount of flow. 
}
	\label{simulator}
\end{figure}

We conducted case studies using the data of taxi trips in the Seoul metropolitan area in 2018.\footnote{The data were provided by Kakao Mobility,  a mobility platform company located in South Korea.}
The  data includes  information about the ride requests and the drivers' status in each cell of the hexagonal grid in two consecutive weeks in  October.
The ride request information includes the estimated fare, the estimated travel time, and the indices of the departing and  destination cells of each ride request. Moreover, the drivers' status information is used to estimate how many drivers go online and how many go offline in each cell.

In our simulator, the number of ride requests $r_{i,t}$ is given by the number of ride requests for which the departing cell is $\mathcal{N}_i$ and the starting time is $t$. 
The simulator computes the number of idle drivers $d_{i,t}$ as the number of remaining drivers in the previous time step $t-1$ plus the number of drivers going online, minus the number of drivers going offline:
\begin{align*}
d_{i,t}&:=(\mbox{\# of remaining drivers at $(t-1)$ in $\mathcal{N}_i$})\\
&\quad+(\mbox{\# of drivers becoming online in $\mathcal{N}_i$})\\
&\quad-(\mbox{\# of drivers going offline in $\mathcal{N}_i$}).
\end{align*}
When the number of drivers going offline is large, so that the left-hand side is negative, we simply define $d_{i,t}$ as 0. 
 
In addition to the realistic case where idle drivers can go online or go offline, we also consider the case where they cannot go online or go offline. In the latter case, the total number of drivers is preserved during the simulation.
A single simulation lasts 24 hours from midnight of the studied day. 
In the beginning of every single simulation, the number of idle drivers $d_{i,0}$ on each node is initialized with that of the actual data.

We generated a hexagonal grid of the map of Seoul and put an index on each grid by using H3 \cite{uberh3}.
The total number and the location of the nodes are fixed for all experiments (Figure \ref{simulator}).
There are 321 valid nodes in total, but some nodes are located in unreachable regions. We ignored these nodes when controlling the vehicles. 
All the numerical experiments were conducted on a 64-bit Ubuntu 18.04.2 LTS with Intel Core i7-8700K CPU @3.70GHz.
 The simulator was implemented by Python, and the algorithm for solving the LP problem was implemented by C and CPLEX Studio 129.

\subsection{The Oracle}
As a first comparison, we introduce the oracle, which uses all the   information about ride requests, even in the future,  to determine an optimal dispatching plan. 
The oracle also manages the movement of the fleet in three stages for each single time step, as seen in Figure \ref{fig:dispatch}. In the first stage, the oracle moves the idle drivers to neighboring nodes. During the second stage, the oracle decides whether or not each ride request is served. 
Finally, in the third stage, we assume that the existing vehicles are not allowed to go offline or new vehicles cannot go online. However, the vehicles that complete the ride request can go online at their destination cells. 

The oracle's dispatch solution  maximizes the total profit from the ride-hailing service throughout the day. 
To set up the exact optimization problem for the oracle, let $m$ be the total number of the requests during a day. 
We denote the $a$th request as a 5-tuple  $R^a=(R^a_{\textup{dep}},R^a_{\textup{dest}},R^a_{\textup{start}},R^a_{\textup{dur}},R^a_{\textup{price}})$, where each component of $R^a$ has the following information:
\begin{itemize}
	\item $R^a_{\textup{dep}}$ : the grid index of departing node of the request
	\item $R^a_{\textup{dest}}$: the grid index of destination node of the request
	\item $R^a_{\textup{start}}$: the start time of the request
	\item $R^a_{\textup{end}}$: the end time of the request
	\item $R^a_{\textup{price}}$: the price of the request.
\end{itemize}

Moreover, we define $\mathcal{R}_{i,k}^{\textup{dep}}$ as the set of requests whose departing node and time are $\mathcal{N}_i$ and $k$, respectively: 
\[\mathcal{R}_{i,k}^{\textup{dep}}:=\left\{a\mid R_{\textup{dep}}^a=i,\; R_{\textup{start}}^a=k\right\}.\]
Similarly, we define $\mathcal{R}_{i,k}^{\textup{dest}}$ as the set of requests whose destination node and time are $\mathcal{N}_i$ and $k$, respectively:
\[\mathcal{R}_{i,k}^{\textup{dest}}:=\left\{a\mid R_{\textup{dest}}^a=i,\; R_{\textup{end}}^a=k\right\}.\]
Note that the cardinality of $\mathcal{R}_{i,k}^{\textup{dep}}$ is $r_{i,k}$, the number of requests in cell $\mathcal{N}_i$ and in time step $k$. 

Finally, since the oracle has to decide whether or not each request is served, we introduce the decision variables whose values are 0 or 1. Precisely, for each $(i,k)$, we introduce the vector $b_{i,k}=(b^{a_1}_{i,k},\ldots,b^{a_{r_{i,k}}}_{i,k})\in \{0,1\}^{r_{i,k}}$ denoting the decision of the oracle on the requests in $\mathcal{R}_{i,k}^{\textup{dep}}$, where $a_1,\ldots, a_{r_{i,k}}\in \mathcal{R}_{i,k}^{\textup{dep}}$. Here, $b_{i,k}^{a}=1$ implies that the oracle decides to serve the request $a\in\mathcal{R}_{i,k}^{\textup{dep}}$. We denote the entire vector of $b_{i,k}$ as  $b=(b_{1,0},b_{2,0},\ldots,b_{n-1,T-1}, b_{n,T-1})\in \{0,1\}^m$.

Given all the data of ride requests $(R^1,\ldots, R^m)$, the optimization problem for the oracle can be formulated as the following mixed-integer  linear program:
\begin{align}
\begin{aligned} \label{oracle}
\max_{b,x}\quad&\sum_{i=1}^n\sum_{k=0}^{T-1}\sum_{a\in \mathcal{R}_{i,k}^{\textup{dep}}} b_{i,k}^a R_{\textup{price}}^a -\sum_{i,j,k} c_{i\to j,k}x_{i\to j,k},\\
\mbox{s.t.}\quad & \sum_{a\in \mathcal{R}_{i,t}^{\textup{dep}}}b_{i,t}^a+s_{i,t}=d_{i,t}+\sum_{j=1}^n x_{j\to i,t}-\sum_{l=1}^nx_{i\to l,t} \\
&d_{i,t}=s_{i,t-1}+\sum_{j=1}^n\sum_{k=0}^{t-1} \sum_{a\in \mathcal{R}_{j,k}^\textup{dep}\cap \mathcal{R}_{i,t}^\textup{dest}} b_{j,k}^a \\
&\sum_{j=1}^n x_{i\to j,t}\le d_{i,t},\quad x_{i\to j,t}\ge 0,\quad s_{i,t}\ge0,
\end{aligned}
\end{align}	
where all the constraints are satisfied for all $t=0,1,\ldots, T-1$ and $i=1,2,\ldots, n$, except the second constraint which holds for $t\ge 1$.
The first constraint  represents the balance law of taxi flow. More precisely, $\sum_{a}b_{i,t}^a$ implies the number of served requests in node $\mathcal{N}_i$ in time step $t$, and $s_{i,t}$ denotes the number of remaining drivers in node $\mathcal{N}_i$ after the requests in  time step $t$ are served. This quantity should be the same as the number of drivers before the requests are served, which is the right-hand side of the first constraint. 
As specified in the second constraint, the number of idle drivers $d_{i,t}$ should be the same as the sum of the number of remaining drivers $s_{i,t-1}$ and the number of drivers going online after serving the requests, whose destination and end time are $\mathcal{N}_i$ and $t$, respectively.

To solve \eqref{oracle}, we relax the integer constraint $b\in\{0,1\}^m$ to the constraint $0\le b_{i,t}^a\le 1$. 
Then, the relaxed problem is an LP. 
However, the optimal $b$ of the relaxed LP, obtained by CPLEX,
 is an integer vector in all of our numerical experiments.\footnote{In fact, given the relaxed LP of \eqref{oracle}, it is possible to construct the equivalent flow problem with integer capacity. Then, Theorem \ref{T1} guarantees that $b$ is an integer vector.}
Thus, this solution is optimal for the original problem~\eqref{oracle}. 
 Note that the performance of the oracle has to be better than that of any other algorithm, since the oracle uses all of the ride request data and foresees the future. For example, the oracle can handle  unpredictable ride requests caused by special events, while other algorithms cannot.

\begin{figure}[tb] 
	 \centering
    \begin{subfigure}{0.49\textwidth}
    	 \centering
	\includegraphics[width=3in]{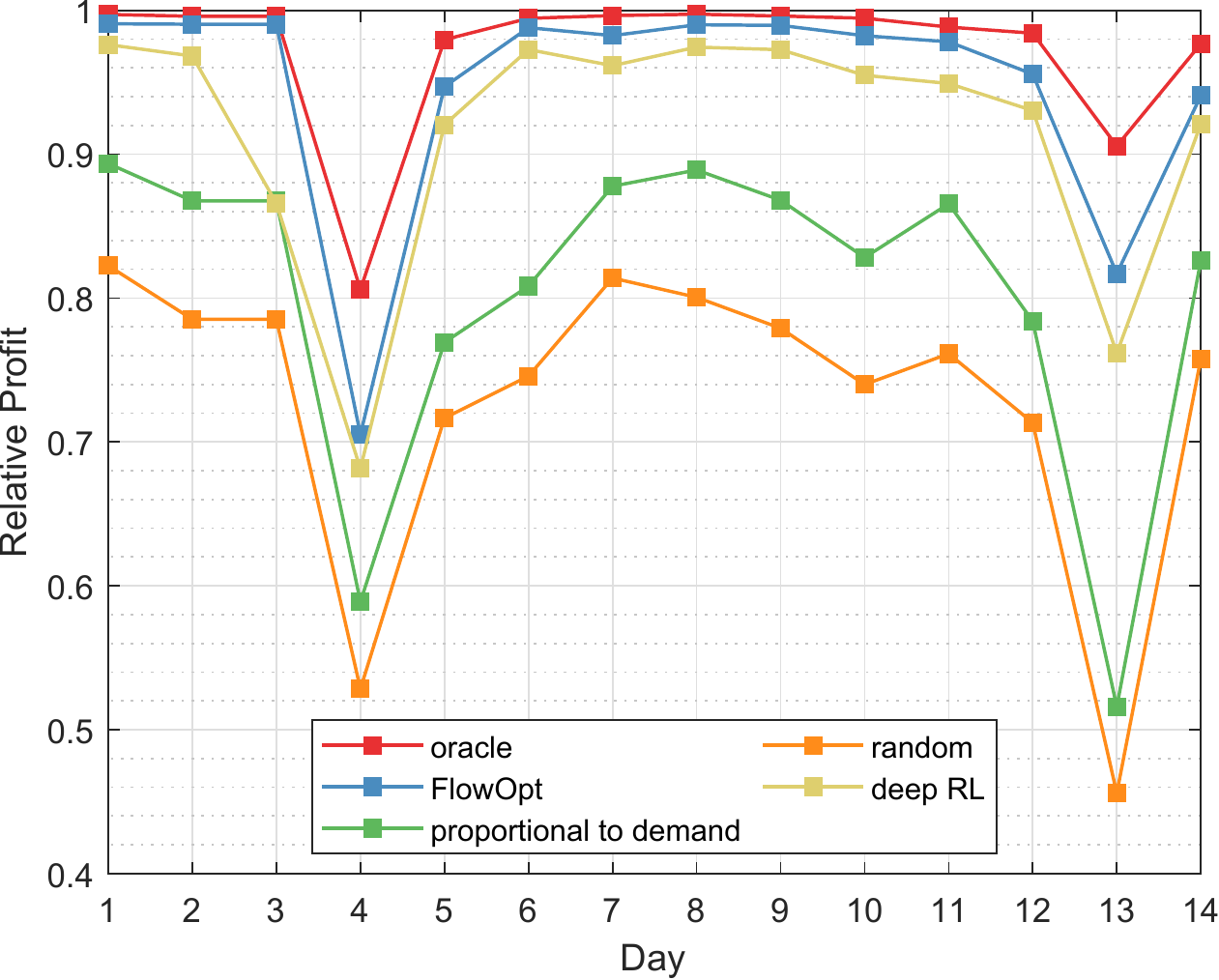}	
	  \caption{}
    \end{subfigure}
    \begin{subfigure}{0.49\textwidth}
    	 \centering
	\includegraphics[width=3in]{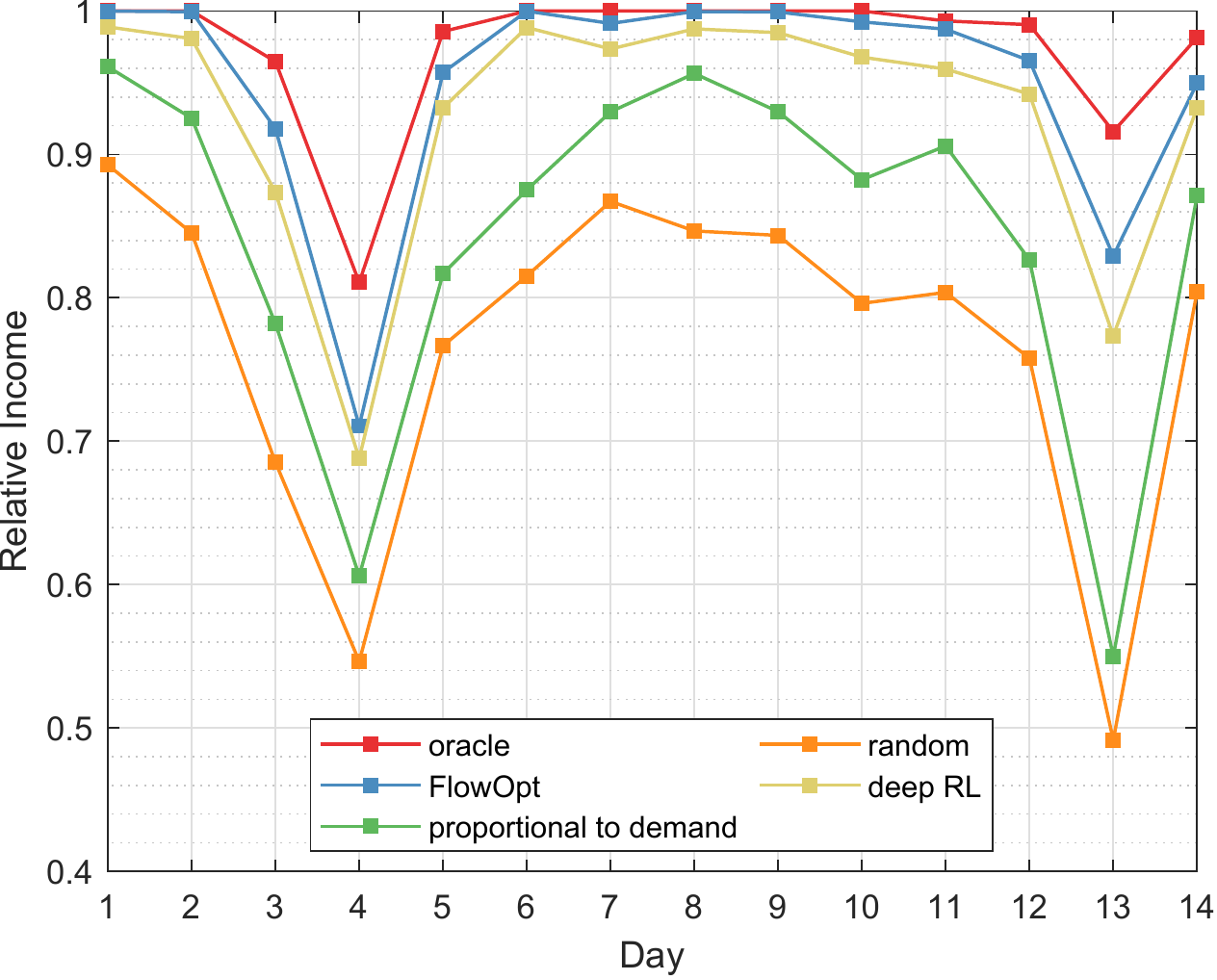}
	  \caption{}
    \end{subfigure}
	\caption{Performance of the algorithms without allowing the drivers to go offline when not serving a ride or new drivers to be online: (a) relative profit, and (b) relative income. }\label{figure1}
\end{figure}

\subsection{Other algorithms}

To compare our method  with other algorithms, we considered a deep reinforcement learning (RL)  algorithm, the random-move algorithm, and the proportional-to-demand algorithm. The deep RL algorithm is an actor-critic method in which drivers in the particular grid cell use  information  within the radius of three hexagons~\cite{linkaixiang2018efficient, Kim2019}. 
The random-move algorithm equally distributes the idle drivers at a particular node to the adjacent nodes. 
The proportional-to-demand algorithm distributes the idle drivers at a particular node to the adjacent nodes, proportional to the number of requests in each node. We provide the details of these algorithms in Appendix.

\subsection{Comparison}

\subsubsection{Results with the original driver data}

We compared the performance of the aforementioned algorithms with that of the proposed flow network-based LP method, which we refer to as {\it FlowOpt} throughout this subsection. 
We first used the original initial driver data $d_{i,0}$; simulations with the modified initial driver data will be discussed in the following subsection. We set the prediction horizon $K=30$ for FlowOpt, since 30 is slightly larger than the diameter of the directed graph generated by the map of Seoul. Therefore, any vehicle starting at a particular node can reach all the other nodes during the prediction horizon of FlowOpt.

We used two performance measures. The first is the {\it relative profit}, given by
\begin{align*}
\frac{(\mbox{Actual GMV}) - (\mbox{Total reposition cost})}{(\mbox{Maximal GMV})}.
\end{align*}
The second is the {\it relative income}, given by
\begin{align*}
\frac{(\mbox{Actual GMV})}{(\mbox{Maximal GMV} )},
\end{align*}
where the gross merchandise volume (GMV) represents the sum of served requests' fares, as discussed in the beginning of  Section~\ref{sec:setup}.
The maximal GMV is achieved if all the ride requests are served.

\begin{table}[tb]
\normalsize
\centering
\begin{tabular}{ l  || C{0.9in} | C{0.9in}} 
\hline
		Algorithm		& Relative profit	 & Relative  income  \\ \hline\hline
{Oracle} 	&  0.9720 			& 0.9744 \\ 
{\bf FlowOpt}  &   {\bf 0.9462} 			& {\bf 0.9499}   \\ 
{Deep RL} &	0.9150            & 0.9267 \\ 
{Prop-to-demand} & 0.8035            & 0.8441\\
{Random-move} &0.7289            & 0.7687 \\ 
\hline
\end{tabular}
\caption{
The relative profit and income (on average) obtained by the algorithms without allowing the drivers to go offline when not serving a ride or new drivers to be online.
}\label{tab:perf1}
\end{table}


Figure \ref{figure1}  displays the performance of FlowOpt, the oracle, the deep RL algorithm, and the two rule-based algorithms without allowing the drivers to go offline when not serving a ride or new drivers to be online.
The  result indicates that  FlowOpt performs almost as good as the oracle, and it significantly outperforms the two rule-based algorithms. 
This is a surprising result, because the oracle uses all the data and  foresees the future, while  FlowOpt only uses  the data of the current time step, together with the receding horizon approximation. 
This implies that the receding horizon approximation not only simplifies the problem, it  also provides a reasonable approximation. 
One interesting feature shown in Figure \ref{figure1} is that the performance of all the methods is degraded on Day 4 and Day 13. 
These unexpected sudden drops in performance are due to special events held in Seoul on those days. 
Table~\ref{tab:perf1} summarizes the relative profit and income averaged over a period of  two weeks. 
On average, the performance of FlowOpt was found to be  97.16\% of the oracle's performance. Furthermore, we note that the computational time of FlowOpt for determining a taxi dispatching plan for a single time step is around 3 seconds in maximum. Since the time interval of the single time step is 10 minutes, the computational time for FlowOpt is sufficiently small and therefore, FlowOpt can be used in real-time taxi dispatching.

\begin{figure}[tb]
	  \centering
    \begin{subfigure}{0.49\textwidth}
        \includegraphics[width=3in]{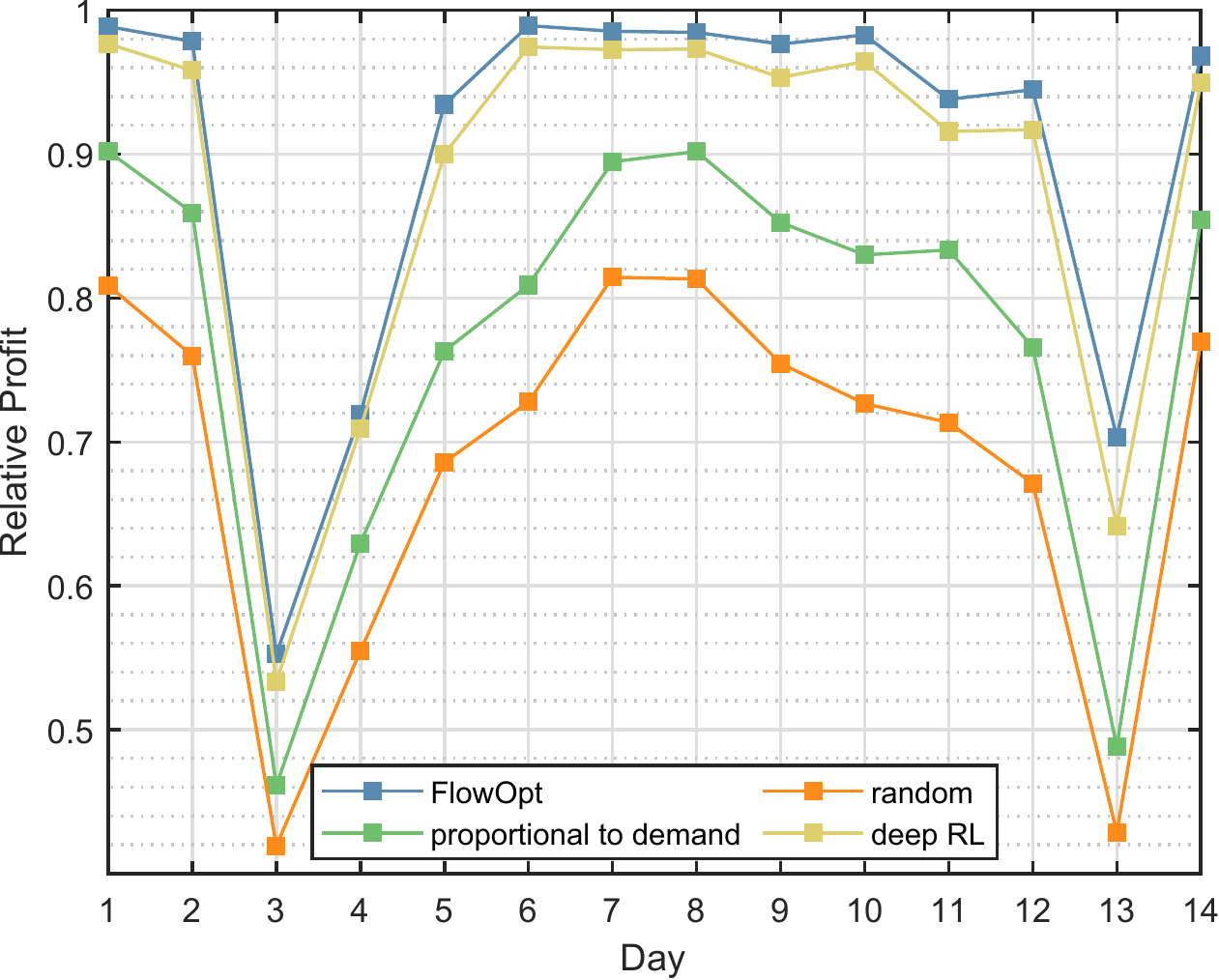}
		  \caption{}
    \end{subfigure}
    \begin{subfigure}{0.49\textwidth}
        \includegraphics[width=3in]{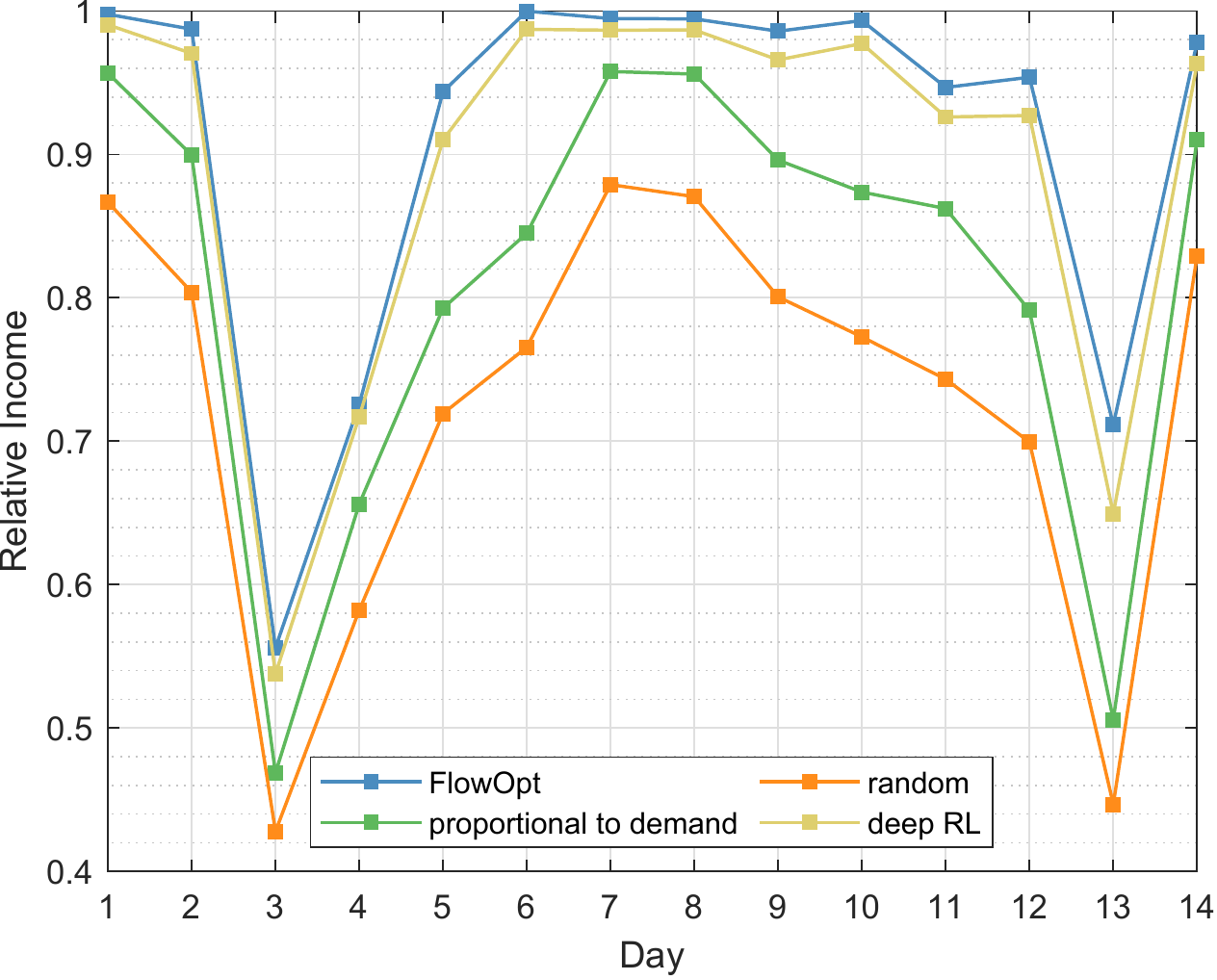}
	  \caption{}
    \end{subfigure}
	\caption{Performance of the algorithms when allowing drivers to go online/offline: (a) relative profit, and (b) relative income. }
	\label{figure2}
\end{figure}

Figure \ref{figure2}  shows the performance of all the algorithms, except the oracle,  when allowing drivers to go online and go offline.  
We excluded the oracle in this simulation experiment, since it results in  an undesirable outcome. 
Because the oracle uses the status data of real drivers, it knows in which hexagonal cell the real drivers go offline. 
Thus, the oracle tends not to move the idle drivers    to the cells where the real drivers go offline.
 This prevents drivers  from going offline; therefore, whether the drivers go online and offline  becomes meaningless to the oracle.
 Again, FlowOpt  demonstrated the best performance, although the overall performance was decreased by approximately 5\% from the previous case. We note that the performance on Day 3 is significantly lower than it was in the previous case.
According to the driver data on Day 3, a large number of drivers go offline in the early morning (around 5AM). 
Thus, when allowing drivers to go online/offline,
there is a shortage of drivers during the morning commutes, so the performance decreases. 
In Table~\ref{tab:perf2}, we report the relative profit and income averaged over the two-week period when allowing drivers to go online/offline.

\begin{table}[tb]
\normalsize
\centering
\begin{tabular}{ l  || C{0.9in} | C{0.9in}}\hline
	Algorithm			& Relative profit	 & Relative  income  \\ \hline\hline
{\bf FlowOpt}  & {\bf 0.9033}            & {\bf 0.9121}  \\ 
{Deep RL} &  0.8814            & 0.8925 \\ 
{Prop-to-demand} & 0.7746            & 0.8122\\
{Random-move} &0.6891            & 0.7290 \\ 
\hline
\end{tabular}
\caption{
The relative profit and income (on average) obtained by the algorithms when allowing drivers to go online/offline. 
}\label{tab:perf2}
\end{table}

\subsubsection{Results with reduced driver data}
In the data, there are a sufficient number of taxi drivers to serve almost all ride requests. 
Thus, we tested if FlowOpt can perform as good as the oracle, even when there are not enough drivers. 
We modulated the initial number of idle drivers $d_{i,0}$ by multiplying the ``driver multiplier" so that  the algorithms were evaluated using fewer initial drivers.
Figure \ref{figure3} displays the performances of the algorithms with fewer drivers on Day 1 without allowing the drivers to go offline when not serving a ride or new drivers to be online. Obviously, as the number of drivers decreases,  it is less likely that an arbitrary ride request is served. 
This leads to a decrease in the GMV. 
This explains why the relative profit reduces as the driver multiplier getting smaller. 
Nevertheless, FlowOpt  maintains the best performance for every choice of driver multiplier among the other algorithms,  except the oracle.

\begin{figure}[tb]
	  \centering
        \includegraphics[width=3.35in]{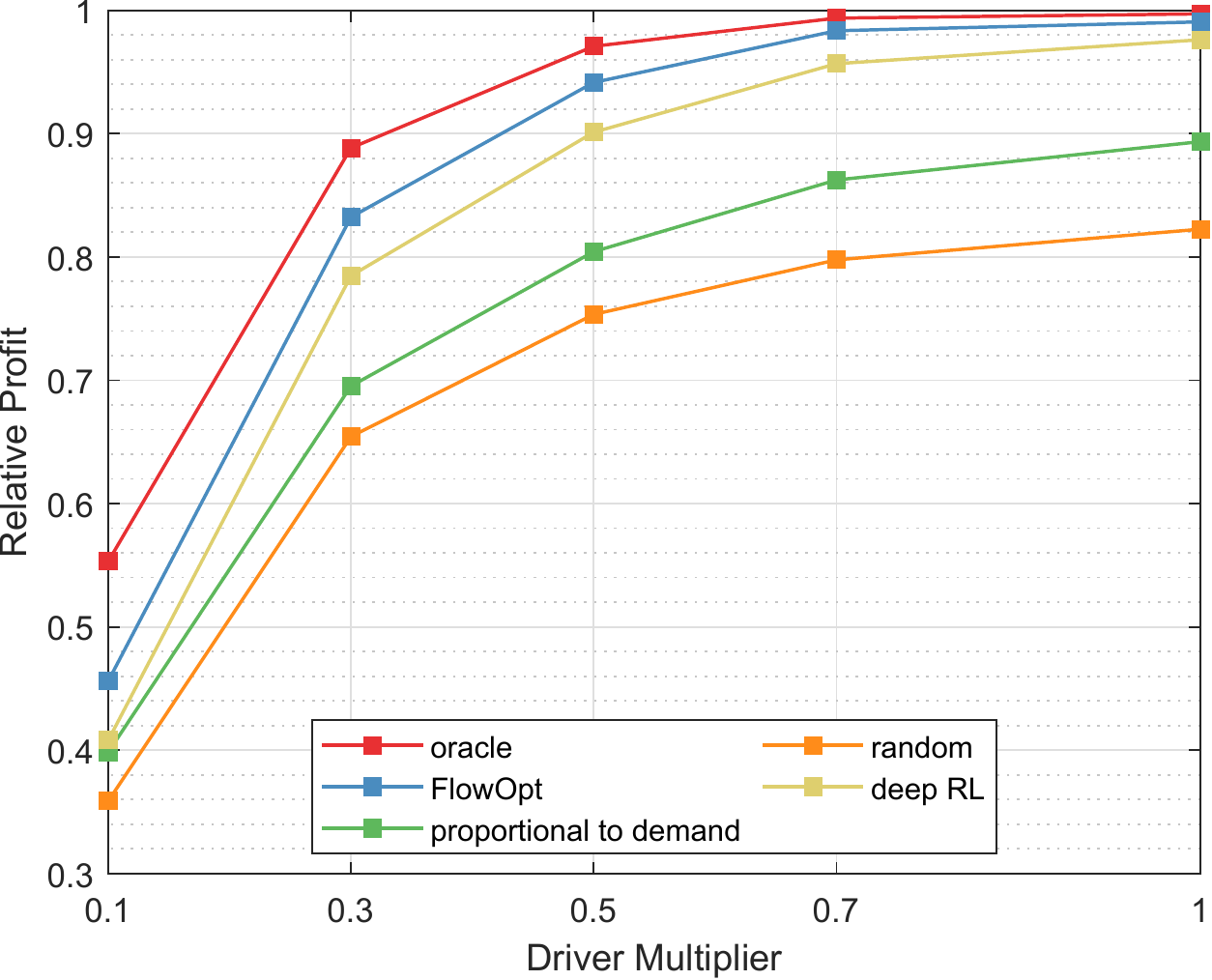}
	\caption{Relative profit of the algorithms with fewer drivers on Day 1.}
\label{figure3}
\end{figure}

\subsubsection{Reposition costs}
We then considered the reposition costs incurred by the algorithms. 
Figure \ref{figure5} shows the ratio between the total reposition cost to  the maximal GMV for each algorithm without allowing the drivers to go offline when not serving a ride or new drivers to be online.
\begin{figure}[tb]
\centering
	\includegraphics[width=3.35in]{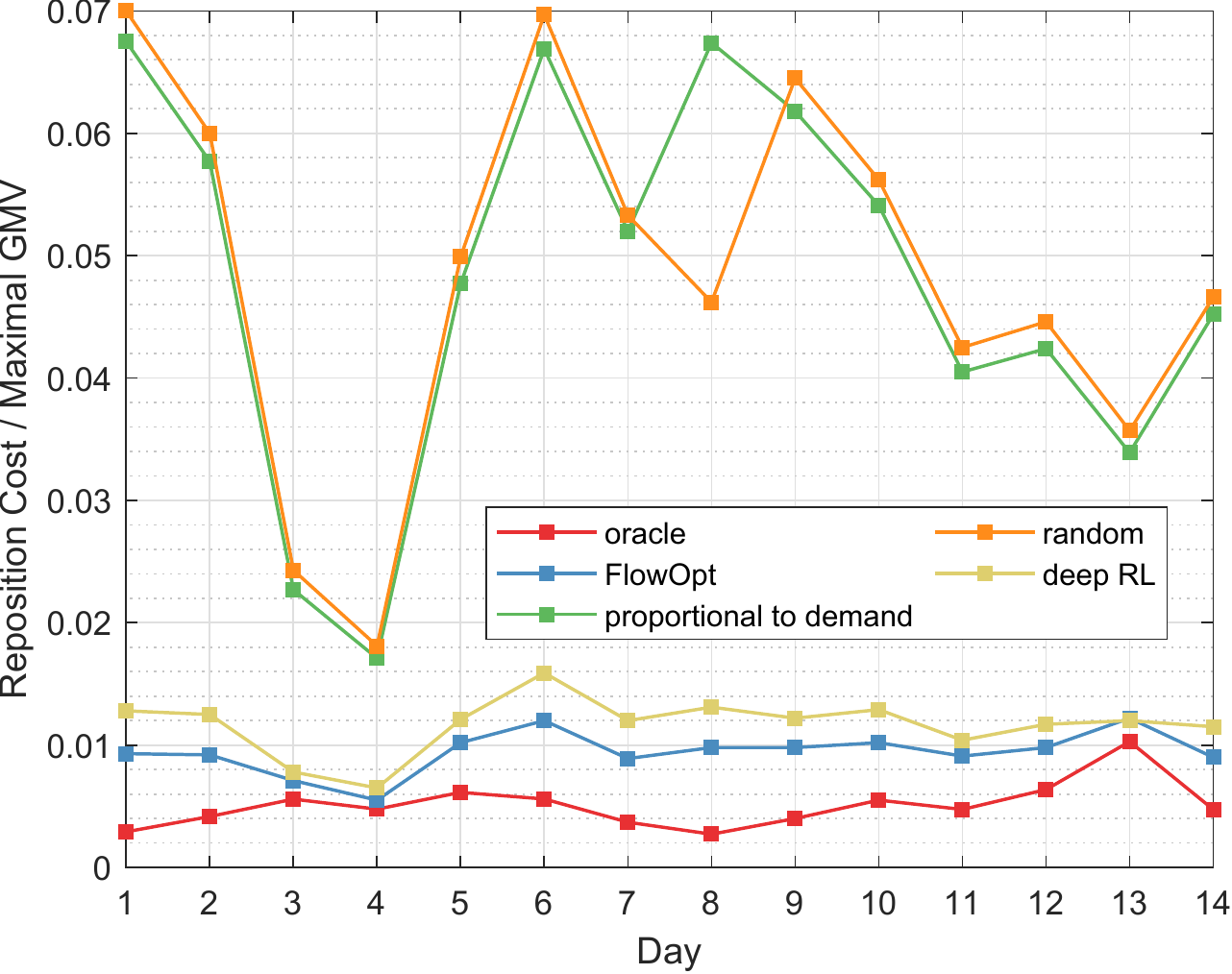}
	\caption{The ratio between the total reposition cost to  the maximal GMV.}
	\label{figure5}
\end{figure}
We observed that  FlowOpt, the oracle and the deep RL algorithm maintain a reposition cost that is almost negligible. 
The oracle has the lowest reposition cost, and  FlowOpt and the deep RL  algorithm have  reposition costs that are similar to that of the oracle.
The two rule-based algorithms have significantly higher reposition costs than FlowOpt since they do not have any explicit mechanism to reduce the reposition cost. Moreover, the large fluctuations of two rule-based algorithms are also due to the absence of a mechanism to control the reposition cost. Since those algorithms just send idle vehicles to adjacent nodes without considering the reposition cost, they cannot control the reposition cost. Therefore, the ratio between reposition cost and maximal GMV severely depends on the dynamics of maximal GMV. On the other hand, the other algorithms can control the reposition cost by considering it as a part of an objective function of the optimization problem. Thus, although the data changes day to day, those algorithms can stably maintain the ratio between the reposition cost and maximal GMV.

\subsection{Effect of the Prediction Horizon} 

\begin{figure}[tb]
	\centering
        \includegraphics[width=3.35in]{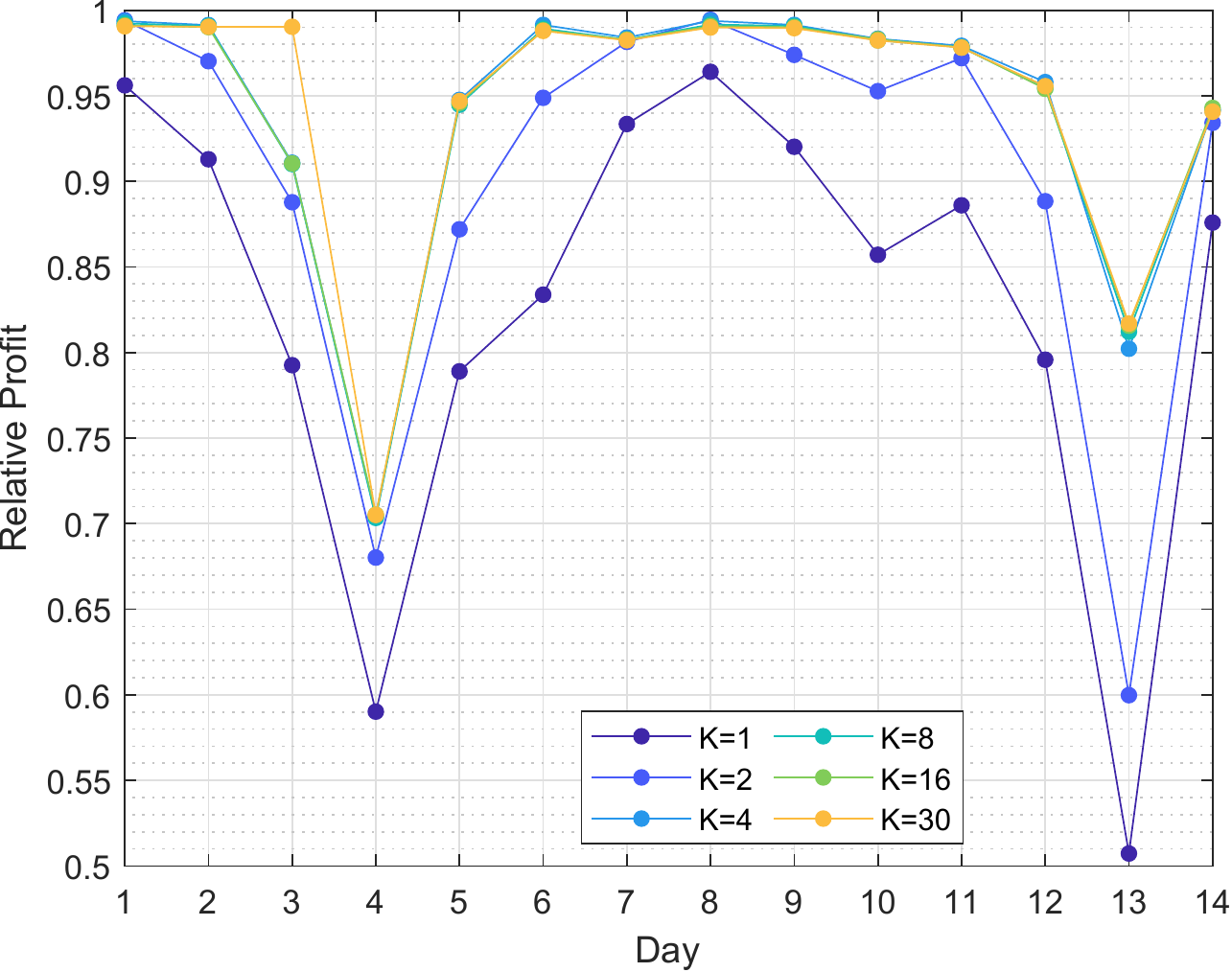}
	\caption{Effect of the prediction horizon length on the relative profit of FlowOpt.}
\label{figure6}
\end{figure}

Lastly, we examined the effect of the prediction horizon length $K$ on the performance of FlowOpt. 
Figure \ref{figure6}    displays the performance with various prediction horizon lengths $K$ without allowing the drivers to go offline when not serving a ride or new drivers to be online.
The performance tends to increase with the prediction horizon length. 
When $K$ is small, FlowOpt   works well for ordinary days, including Day 1, Day 2, Day 9 and Day 14. 
However, on days when there are special events in Seoul, such as Day 3, Day 4, and Day 13, 
there is a significant benefit of using a large $K$.

\section{Conclusion}\label{sec:conclusion}

In this paper, we proposed  a  predictive taxi dispatch method  to serve as many ride requests as possible while minimizing the cost of repositioning vehicles.
By converting the multi-objective receding horizon optimization problem into a MCMF problem, we obtained two desired theoretical properties:   the integer property and the time-greedy property. 
For a  near real-time dispatch solution, an LP method was also developed. 
According to the experiment results using the data for Seoul, South Korea, 
the performance of the proposed MPC method is almost as good as  that of the oracle, and it outperforms the other methods that were evaluated. 
The performance of our solution also increases with the prediction horizon length. 

The proposed taxi dispatch method can be further extended in many interesting ways. First, it can be used in conjunction with statistical learning methods that predict  demand. 
Second, a robust version of the proposed MPC method can  be considered to overcome the issues that arise from prediction errors and uncertainties.
Third, by using real-time traffic information,
the travel time of vehicles can also be considered in the proposed multi-objective setting.

\appendix
\section{Explanation of the other algorithms used in the simulations}
\subsection{Deep RL algorithm}
This subsection provides details about the deep RL algorithm used in our simulation studies. 
The RL algorithm uses  the advantage actor-critic method 
that consists of one actor and one critic, both of which are parameterized by deep neural networks~\cite{mnih2016asynchronous}.
This actor-critic pair is optimized to yield the profit-maximizing dispatch solution for an arbitrary node.
To be specific, our RL agent provides a generic solution to the distribution of requests and drivers, without considering the spatio-temporal information about a node (e.g., time, node ID, or geographic coordinate).
Given a node index $i\in\{1,\dots,N\}$, the goal of the optimization is to maximize the discounted sum of the net incomes earned in $\mathcal{N}_i$ throughout all time steps
\[
\sum_{t=0}^{\infty} \sum_{a\in \mathcal{R}_{i,t}^{\textup{dep}}} \gamma^{t} \bigg ( b_{i,t}^a R_{\textup{price}}^a -\sum_{j=1}^{n} c_{i\to j,t}x_{i\to j,t} \bigg ),
\]
where $\gamma$ is the discount factor set to be $0.9$.
Note that the original problem is to optimize the net income for one day, but the RL agent solves the infinite horizon problem because it neglects the time limit.
The RL agent assumes that the simulation runs forever, and the distribution of ride requests is the same for each day.

The state is given by a vector that includes the number of requests $r_{\cdot,t}$ and the number of drivers $d_{\cdot,t}$ in a node and its neighboring nodes.
We define a neighboring node as a node within the first, second, and third layer in a hexagonal grid~\cite{Kim2019}.
In other words, no more than two hexagons are in between the given node to a neighboring node.
Thus, the dimensionality of a state is $2\times(1+6+12+18) = 74$.
The action is a $7$-dimensional vector where each entry indicates the number of drivers to move towards the specific direction (including staying in the current node).
Ideally, an actor should provide a dispatch solution for a fleet in the node; however, it is difficult to implement such an actor in a neural network due to the high dimensionality of the action space.
Thus we construct the actor that computes a policy for a single driver.
That is, the actor of our RL agent yields the {\it probability} of choosing the directions.
If some directions are inaccessible, the probability of reaching them is set to $0$, then the probability vector is normalized to $1$.
Finally, each driver selects its direction by sampling from the probability vector.

The actor and the critic are implemented in fully connected networks.
Each network consists of three hidden layers and an output layer.
The hidden layers have output dimensions of $64$, $32$, and $16$, from the first layer to the third layer, respectively, but the actor and the critic do not share any weights.
For all layers, including the output layer, we use ReLU as the activation function.

We construct multiple RL agents, as described above, and  optimize each agent for each day of the ride-hailing service.
Each agent repeats the learning episode $100$ times.
In each episode, we run the simulation once using the actor, store the simulation data related to learning in the memory, and train the actor-critic with the memory.
The memory contains the number of drivers, the number of ride requests, and the revenue along with the reposition cost in each node.
The information is kept intact throughout an episode, but it is removed from the memory after an episode is finished.
For each episode, mini-batch gradient descent is applied to the actor-critic, $4000$ times.
A mini-batch has the size of $3000$ samples.
We use Adam  \cite{kingma2015adam} to minimize the loss function in training the neural networks.  
The learning rates for the actor and the critic are $0.001$ and $0.005$, respectively.

\subsection{Proportional-to-demand and Random-move algorithms}
This subsection describes  the two rule-based algorithms, namely the proportional-to-demand and the random-move algorithms. For any grid cell $\mathcal{N}_i$, let $\mathcal{N}_{i_1}, \ldots ,\mathcal{N}_{i_l}$ be its neighboring hexagonal cells. 
In each time step $t$, the proportional-to-demand algorithm sends $n_{i_j}$ drivers from $\mathcal{N}_i$ to $\mathcal{N}_{i_j}$, where $n_{i_j}$ is proportional to the number of requests in $\mathcal{N}_{i_j}$. Precisely, the proportional-to-demand algorithm sends $\lfloor\omega_{i,j}d_{i,t}\rfloor$ drivers to the adjacent cell $\mathcal{N}_{i_j}$, where the weight $\omega_{i,j}$ is given by
\[\omega_{i,j}:=\frac{r_{i_j,t}}{r_{i,t}+\sum_{k=1}^l r_{i_k,t}},\]
and $\lfloor x\rfloor$ is the largest integer that does not exceed $x$. The remaining $d_{i,t}-\sum_{j=1}^{l}\lfloor \omega_{i,j}d_{i,t}\rfloor$ drivers stay in $\mathcal{N}_i$.

On the other hand,  the random-move algorithm  does not consider the number of requests in the neighboring cells. It sends $\lfloor\omega_{i,j}d_{i,t}\rfloor$ drivers from $\mathcal{N}_i$ to the adjacent cell $\mathcal{N}_{i_j}$, where the weight is chosen as $\omega_{i,j}=\frac{1}{l+1}$.

\bibliographystyle{IEEEtran}
\bibliography{related_work,refered_methods}

\end{document}